\documentclass[twocolumn,a4paper,unpublished]{quantumarticle}
\pdfoutput=1   

\usepackage{amsmath,amssymb}
\usepackage{amsthm}
\usepackage{bm}
\usepackage{graphicx}
\usepackage{array}
\usepackage[caption=false,font=footnotesize]{subfig}
\usepackage{booktabs}   
\usepackage{hyperref}

\theoremstyle{plain}
\newtheorem{theorem}{Theorem}
\newtheorem{lemma}{Lemma}
\newtheorem{proposition}{Proposition}
\newtheorem{corollary}{Corollary}
\theoremstyle{definition}
\newtheorem{assumption}{Assumption}
\newtheorem{condition}{Condition}
\newtheorem{definition}{Definition}
\theoremstyle{remark}
\newtheorem{remark}{Remark}

\begin{document}

\title{Information-theoretic limits on undetectable parameter-estimation attacks in continuous-variable quantum key distribution}

\author{Agung Trisetyarso}
\email{trisetyarso@binus.ac.id}
\orcid{0000-0003-0307-7735}
\affiliation{Department of Mathematics and Statistics, School of Computer
Science, Bina Nusantara University, Jakarta 11480, Indonesia}

\author{Lenny Putri Yulianti}
\affiliation{School of Electrical Engineering and Informatics, Institut
Teknologi Bandung, Bandung 40132, Indonesia}

\author{Kridanto Surendro}
\affiliation{School of Electrical Engineering and Informatics, Institut
Teknologi Bandung, Bandung 40132, Indonesia}

\begin{abstract}
We formalize side-channel detection in Gaussian-modulated continuous-variable
quantum key distribution (CV-QKD) as a certificate-forgery hypothesis test
and characterize its detectability as an information-theoretic rate. A
\emph{benign-statistics certificate} is defined relative to a set
$\mathcal{M}$ of trusted, real-time-monitored observables, and its
forgery-detectability rate $g_{\mathcal M}$ is identified with the
missed-detection Stein exponent. We prove that $g_{\mathcal M}$ is monotone
and strictly positive in the leaked Holevo information whenever the
shot-noise unit is trusted, and identically zero otherwise, recovering the
known local-oscillator and calibration attacks as the degenerate case.
Strict monotonicity, and hence a well-defined \emph{certificate reusability
rate}, is proved unconditionally in the trusted-correlation model and, for
general $\mathcal{M}$, under an explicit no-spurious-local-minima condition
on the divergence landscape, which we prove in the trusted-correlation model
and which is verifiable by low-dimensional inspection in general. The
reusability rate bounds the Holevo leakage compatible with an undetected
certificate over $N_{\mathrm{PE}}$ estimation rounds, and follows from a
rigorous non-asymptotic bound
$\varepsilon_{\mathrm{PE}}\le\exp(-N_{\mathrm{PE}}\psi(r))$ whose
near-threshold expansion matches the Gaussian confidence-interval scaling of
standard finite-size analyses. We then integrate the exponent into a
composable finite-size key-rate statement: under collective Gaussian
attacks, the extractable key length is that of the standard leftover-hash
analysis in which the worst-case Holevo term is the certificate reusability
rate and the parameter-estimation failure probability is bounded by the
Stein exponent at every block length, with the conventional Gaussian
confidence-interval penalty recovered as the near-threshold limit. Strict monotonicity of the
Holevo leakage in the excess noise---the invertibility property on which the
reusability rate and the composable statement rest---is proved analytically
by a noise-injection argument, with the quantitative bound
$\chi_{BE}(\xi_2)-\chi_{BE}(\xi_1)\ge\tfrac12
\ln[\Sigma_{22}(\xi_2)/\Sigma_{22}(\xi_1)]$, so both results are
unconditional in the trusted-correlation model. We prove convexity of the
rate on the operational range under a numerically supported concavity
assumption on the Holevo bound---the sole remaining numerical ingredient,
affecting only the convexity statement---and delimit throughout the proved
and conjectured statements.
\end{abstract}

\maketitle

\section{Introduction}
\label{sec:intro}
Practical security proofs of continuous-variable quantum key distribution
(CV-QKD)~\cite{Grosshans2002,Weedbrook2012,Pirandola2020,Scarani2009} reduce,
at their core, to a decision about the channel: from a finite record of
correlated samples, the legitimate parties must decide whether the estimated
statistics are consistent with a secure channel. A recurring class of
attacks---local-oscillator manipulation~\cite{Ma2013,Qi2015}, detector
saturation, wavelength tampering~\cite{Huang2013}, calibration
attacks~\cite{JouguetCalib2013}, and side-channel
leakage~\cite{Derkach2016,Derkach2017}---succeeds not by raising the
observed excess noise but by \emph{biasing the estimate} so that a
compromised channel presents statistics indistinguishable from a benign one.
In the language of proofs, the eavesdropper fabricates a certificate of
benign statistics.

We take this analogy literally. Fixing a set $\mathcal{M}$ of trusted,
real-time-monitored observables, we ask at what information-theoretic rate a
forged benign certificate is detectable, and how much Holevo information an
eavesdropper can hold while sustaining such a certificate over repeated
estimation. The answer is a single functional, the missed-detection Stein
exponent $g_{\mathcal M}$, whose structure---monotone and strictly positive
under trusted shot-noise monitoring, and identically zero without it---turns
the standard countermeasures into information-theoretically necessary
resources and quantifies certificate reusability as a rate.

Beyond characterizing the exponent, we show where it lives inside a security
proof. Section~\ref{sec:composable} embeds the forgery-detectability rate
into the standard composable finite-size architecture for CV-QKD under
collective Gaussian
attacks~\cite{Renner2005,Leverrier2010,Furrer2012,Leverrier2015}: the
acceptance test of the certificate \emph{is} the parameter-estimation step
of the proof, its missed-detection probability is bounded by the Stein
exponent at every block length (Corollary~\ref{cor:finite}), and the
worst-case Holevo term against which privacy amplification must be run is
exactly the certificate reusability rate (Corollary~\ref{cor:reuse}). The
conventional Gaussian confidence-interval treatment of parameter
estimation~\cite{Leverrier2010} is recovered as the near-threshold expansion
of the resulting key-length formula, which remains rigorous---rather than
asymptotic---at every finite block size.

We keep a strict separation between what is proved and what is verified
numerically. The monotonicity and positivity of $g_{\mathcal M}$
[Theorem~\ref{thm:main}(i)--(ii)] hold for general $\mathcal{M}$; strict
monotonicity is proved unconditionally in the trusted-correlation model via
Lemma~\ref{lem:uni} [Theorem~\ref{thm:main}(iii)] and, for general
$\mathcal{M}$, under a single explicit landscape condition
(Condition~\ref{cond:reg}, Theorem~\ref{thm:main}(iv)) that we prove in the
trusted-correlation model (Proposition~\ref{prop:condTC}); the proof is
elementary---compactness, continuity, and set nesting---and deliberately
avoids envelope-type arguments (Remark~\ref{rem:danskin}). The finite-size
bound (Corollary~\ref{cor:finite}), the reusability rate
(Corollary~\ref{cor:reuse}), and the composable integration
(Theorem~\ref{thm:composable}) are proved in the trusted-correlation model,
and their common analytic prerequisite---strict monotonicity of
$\chi_{BE}(T_0,\cdot)$ in the excess noise---is established by a
noise-injection argument with an explicit quantitative gap
(Proposition~\ref{prop:mono}). The convexity statement
(Theorem~\ref{thm:convex}) is proved under a numerically supported concavity
assumption on the Holevo bound, which is the sole remaining numerical
ingredient (Proposition~\ref{prop:convex}); see Remark~\ref{rem:status}.

\section{Setup and Notation}
\label{sec:setup}
Consider Gaussian-modulated coherent-state (GG02)
CV-QKD~\cite{Grosshans2002} with homodyne detection and reverse
reconciliation, in shot-noise units (vacuum variance $=1$). Let $V_A$ be
Alice's modulation variance, $V=V_A+1$, channel transmittance $T\in(0,1]$,
excess noise $\xi\ge0$ (referred to the channel input), detector efficiency
$\eta$, electronic noise $v_{\mathrm{el}}$, and reconciliation efficiency
$\beta\in(0,1]$. Per measured quadrature the parameter-estimation (PE) data
$(x_A,y)$ form a zero-mean bivariate Gaussian with covariance
\begin{equation}
\Sigma(T,\xi)=
\begin{pmatrix}
V_A & \sqrt{\eta T}\,V_A\\[2pt]
\sqrt{\eta T}\,V_A & \eta T V_A + N
\end{pmatrix},
\label{eq:Sigma}
\end{equation}
where $N=N(T,\xi)=1+v_{\mathrm{el}}+\eta T\,\xi=\operatorname{Var}(y\mid x_A)$
is the conditional variance targeted by excess-noise estimation, and
$\det\Sigma=V_A N$. Throughout, the channel parameters range over a compact
physical domain
$\mathcal{D}:=[T_{\min},1]\times[0,\xi_{\max}]$ with
$0<T_{\min}\le T_0$ and $\xi_{\max}<\infty$; this reflects hardware
constraints (a minimal operable transmittance and detector saturation) and
guarantees attainment of the infima below. Write the benign
operating point $p_0:=(T_0,\xi_0)\in\mathcal{D}$,
$\Sigma_0:=\Sigma(T_0,\xi_0)$,
$N_0:=N(T_0,\xi_0)$, and $r:=N/N_0$. Let $I_{AB}(T,\xi)$ be the Alice--Bob
mutual information and $\chi_{BE}(T,\xi)$ the collective-attack
reverse-reconciliation Holevo bound~\cite{GarciaPatron2006}, both continuous
on the physical domain; write $\chi_{\mathrm{ben}}:=\chi_{BE}(T_0,\xi_0)$
and let
$\chi_{\mathrm{null}}$ denote the leakage at which the asymptotic key rate
$K=\beta I_{AB}-\chi_{BE}$ vanishes. For zero-mean Gaussians the relevant
Stein exponent is the Kullback--Leibler divergence~\cite{Cover2006}
\begin{equation}
D\!\left(\Sigma_0\,\Vert\,\Sigma_1\right)=
\tfrac12\!\left[\operatorname{tr}\!\big(\Sigma_1^{-1}\Sigma_0\big)-2
+\ln\tfrac{\det\Sigma_1}{\det\Sigma_0}\right].
\label{eq:KL}
\end{equation}

A \emph{benign-statistics certificate} is an accepted PE point clearing
$K>0$. Fixing $\mathcal{M}$ with projector $\Pi_{\mathcal M}$ onto the
corresponding entries of $\Sigma$, define the \emph{matching set}
\begin{equation}
\mathcal{F}_{\mathcal M}:=
\big\{p\in\mathcal{D}:
\Pi_{\mathcal M}\Sigma(p)=\Pi_{\mathcal M}\Sigma_0\big\},
\label{eq:matchset}
\end{equation}
which is closed (hence compact) by continuity of $p\mapsto\Sigma(p)$ and
contains $p_0$. The admissible forgery set at leakage level $\chi_0$ is
\begin{equation}
\mathcal{A}_{\mathcal M}(\chi_0)=
\big\{p\in\mathcal{F}_{\mathcal M}:\chi_{BE}(p)\ge\chi_0\big\},
\label{eq:forgeryset}
\end{equation}
we write
$\chi_{\max}^{\mathcal M}:=\max_{p\in\mathcal{F}_{\mathcal M}}\chi_{BE}(p)$
(attained by compactness), and the \emph{forgery-detectability rate}
(missed-detection Stein exponent) is
\begin{equation}
g_{\mathcal M}(\chi_0)=
\inf_{p\in\mathcal{A}_{\mathcal M}(\chi_0)}
D\!\left(\Sigma_0\,\Vert\,\Sigma(p)\right).
\label{eq:gM}
\end{equation}

\begin{assumption}\label{as:std}
(i) Collective Gaussian attacks (coherent attacks reduced via Gaussian
de~Finetti~\cite{Renner2007,Christandl2009,Leverrier2017}); (ii) Gaussian
extremality~\cite{Wolf2006,GarciaPatron2006}, so $\Sigma$ is a sufficient
statistic; (iii) \emph{trusted shot-noise unit}, so the observed covariance
equals the true $\Sigma$ and the map $(T,\xi)\mapsto\Sigma(T,\xi)$ is a
homeomorphism onto its image (from the correlation entry one recovers
$\eta T$, then $N$, then $\xi$; the inverse is continuous).
\end{assumption}

The \emph{trusted-correlation model} is the special case in which
$\mathcal{M}$ fixes $V_A$ and the correlation $\sqrt{\eta T}\,V_A$; then $T$
is pinned at $T_0$ and forgery proceeds only through $N$.

\section{Reduction and Monotonicity Lemmas}
\label{sec:lemma}
\begin{lemma}[Univariate reduction]\label{lem:uni}
In the trusted-correlation model, for every $\xi\ge\xi_0$,
\begin{equation}
\begin{split}
D\!\left(\Sigma_0\,\Vert\,\Sigma(T_0,\xi)\right)&=\psi\!\big(N/N_0\big),\\
\psi(r)&:=\tfrac12\!\left(\tfrac1r-1+\ln r\right),
\end{split}
\label{eq:reduction}
\end{equation}
with $N=N(T_0,\xi)$. The function $\psi$ satisfies $\psi(1)=0$,
$\psi'(r)=\tfrac{r-1}{2r^2}>0$ for $r>1$, and $\psi''(r)=\tfrac{2-r}{2r^3}$
(so $\psi$ is strictly increasing on $[1,\infty)$ and strictly convex on
$[1,2)$). Consequently $D(\Sigma_0\Vert\Sigma(T_0,\cdot))$ is strictly
increasing in $\xi$ on $[\xi_0,\infty)$.
\end{lemma}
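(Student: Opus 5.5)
The plan is to compute the Gaussian KL divergence \eqref{eq:KL} directly. We specialize $\Sigma_1=\Sigma(T_0,\xi)$, which shares every entry with $\Sigma_0$ except the $(2,2)$ entry. Write $a:=\sqrt{\eta T_0}\,V_A$, so that $\Sigma_1=\Sigma_0+(N-N_0)\,e_2e_2^{\top}$. Because the off-diagonal entries are pinned, the determinant formula stated after \eqref{eq:Sigma} gives $\det\Sigma_1=V_AN$ and $\det\Sigma_0=V_AN_0$. The log term is therefore $\ln(N/N_0)=\ln r$.

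For the trace term we use the explicit $2\times2$ inverse:
\begin{equation*}
\Sigma_1^{-1}=\frac{1}{V_AN}\begin{pmatrix}\eta T_0V_A+N & -a\\ -a & V_A\end{pmatrix}.
\end{equation*}
Multiplying by $\Sigma_0$ and keeping only the diagonal entries, we compute
\begin{equation*}
\operatorname{tr}(\Sigma_1^{-1}\Sigma_0)=\frac{1}{V_AN}\Big[(\eta T_0V_A+N)V_A-2a^2+V_A(\eta T_0V_A+N_0)\Big].
\end{equation*}
Using $a^2=\eta T_0V_A^2$, the $\eta T_0V_A^2$ contributions cancel, leaving $\tfrac{N+N_0}{N}=1+\tfrac1r$.

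An alternative route is the Schur-complement factorization $(x_A,y)\mapsto(x_A,\,y\mid x_A)$. This shows at once that the two Gaussians share the marginal of $x_A$ and the conditional mean, and differ only in the conditional variance $N_0$ versus $N$. By the chain rule the divergence then reduces to the univariate $D(\mathcal N(0,N_0)\Vert\mathcal N(0,N))$, which is a useful cross-check. Substituting into \eqref{eq:KL} gives
\begin{equation*}
D=\tfrac12\Big[1+\tfrac1r-2+\ln r\Big]=\psi(r).
\end{equation*}
Note that this identity actually holds for every $\xi$; the restriction $\xi\ge\xi_0$ is needed only for the monotonicity conclusion.

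The properties of $\psi$ follow by elementary differentiation. We have $\psi(1)=\tfrac12(1-1+0)=0$ and $\psi'(r)=\tfrac12(-r^{-2}+r^{-1})=\tfrac{r-1}{2r^2}$, which is positive for $r>1$. Differentiating again gives $\psi''(r)=\tfrac12(2r^{-3}-r^{-2})=\tfrac{2-r}{2r^3}$, which is positive on $[1,2)$. Strict monotonicity on $[1,\infty)$ follows from $\psi'>0$ on $(1,\infty)$ together with continuity at $1$.

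For the final claim, we note that $N(T_0,\xi)=1+v_{\mathrm{el}}+\eta T_0\xi$ is strictly increasing and affine in $\xi$ whenever $\eta T_0>0$, which holds since $T_0\ge T_{\min}>0$ and $\eta>0$. Hence $r=N/N_0$ is strictly increasing, with $r\ge1$ exactly when $\xi\ge\xi_0$. Composing with $\psi$, which is strictly increasing on $[1,\infty)$, gives the strict increase of $D$ in $\xi$ on $[\xi_0,\infty)$.

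There is no real obstacle. The only step needing care is the trace computation, specifically the cancellation that uses $a^2=\eta T_0V_A^2$. It depends on the trusted-correlation hypothesis pinning both $T$ and the off-diagonal entry, so we will state explicitly where that hypothesis enters.
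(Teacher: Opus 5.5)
Your proof is correct, and it is essentially the paper's argument. The paper's proof is exactly what you list as a cross-check: $\Sigma_0$ and $\Sigma(T_0,\xi)$ share the $x_A$-marginal and the regression map $\mathbb{E}[y\mid x_A]=(c/V_A)x_A$. The chain rule then reduces the divergence to $D(\mathcal N(\mu,N_0)\Vert\mathcal N(\mu,N))=\psi(N/N_0)$, with no matrix inversion.

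Your main line instead evaluates \eqref{eq:KL} directly, and the algebra checks out: $\det\Sigma_1=V_AN$ and $\operatorname{tr}(\Sigma_1^{-1}\Sigma_0)=1+1/r$. The direct calculation shows exactly where the pinned correlation enters, through the cancellation via $a^2=\eta T_0V_A^2$. The chain-rule version instead explains structurally why only the conditional variance matters.

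Two of your side remarks are consistent with the paper. The identity holds for all $\xi$, which the paper uses later in Proposition~\ref{prop:condTC}. Strict monotonicity needs $\eta T_0>0$.
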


\begin{proof}
With $V_A$ and the correlation $c=\sqrt{\eta T_0}\,V_A$ matched, $\Sigma_0$
and $\Sigma(T_0,\xi)$ share the $x_A$-marginal ($V_A$) and the conditional
mean map $\mathbb{E}[y\mid x_A]=(c/V_A)x_A$, differing only in the
conditional variance ($N_0$ vs.\ $N$). By the chain rule for relative
entropy the $x_A$-marginal contributes zero and the conditional contributes
$D\!\left(\mathcal N(\mu,N_0)\Vert\mathcal N(\mu,N)\right)
=\tfrac12(N_0/N-1+\ln(N/N_0))=\psi(N/N_0)$, independent of $x_A$. The stated
derivatives are immediate, and since $N(T_0,\xi)$ is strictly increasing in
$\xi$, so is $\psi(N/N_0)$ for $\xi\ge\xi_0$ (equivalently $r\ge1$).
\end{proof}

The second structural ingredient is the strict monotonicity of the Holevo
leakage in the excess noise at fixed transmittance. This is the property
that makes $\chi_{BE}(T_0,\cdot)$ invertible and thereby underpins
Theorem~\ref{thm:main}(iii), the reusability rate of
Corollary~\ref{cor:reuse}, and the composable statement of
Theorem~\ref{thm:composable}. It admits an analytic proof by noise
injection, with an explicit quantitative gap. Write
$\Sigma_{22}(\xi):=\eta T\,V_A+N(T,\xi)$ for the variance of Bob's measured
quadrature [the $(2,2)$ entry of~\eqref{eq:Sigma}].

\begin{proposition}[Noise-injection monotonicity bound]\label{prop:mono}
Fix $T\in(0,1]$ and $0\le\xi_1<\xi_2\le\xi_{\max}$. Then
\begin{equation}
\begin{split}
\chi_{BE}(T,\xi_2)\;&\ge\;\chi_{BE}(T,\xi_1)
+\frac12\ln\frac{\Sigma_{22}(\xi_2)}{\Sigma_{22}(\xi_1)}\\
&\;>\;\chi_{BE}(T,\xi_1).
\end{split}
\label{eq:monobound}
\end{equation}
In particular $\chi_{BE}(T,\cdot)$ is strictly increasing on
$[0,\xi_{\max}]$, with
$\partial_\xi\chi_{BE}\ge \eta T/[2\,\Sigma_{22}(\xi)]>0$ wherever the
derivative exists, and $\chi_{BE}(T,\cdot)^{-1}$ is single-valued.
\end{proposition}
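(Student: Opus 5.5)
The plan is a \emph{noise-injection} argument. The channel with excess noise $\xi_2$ is realized as the channel with excess noise $\xi_1$ followed by an eavesdropper-controlled classical Gaussian displacement. We then compute the Holevo information of this particular attack exactly via the chain rule, and compare it with $\chi_{BE}(T,\xi_2)$ using the optimality of Gaussian attacks from Assumption~\ref{as:std}(ii).

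\textbf{Step 1 (construction).} Fix $T$ and let $\mathcal{E}_1$ be the optimal (entangling-cloner) Gaussian attack realizing $(T,\xi_1)$, with Eve holding the purification $E_1$. Compose it with a random phase-insensitive displacement of Bob's mode, drawn from $\mathcal N(0,T(\xi_2-\xi_1))$ per quadrature and independent of everything else. Eve records the displacement value $z$ classically. Referred to the channel input this raises the excess noise to $\xi_2$ at the same $T$. After the detector (efficiency $\eta$) Bob's outcome is $y_2=y_1+\sqrt{\eta}\,z$, so $N$ increases by $\eta T(\xi_2-\xi_1)$ and the covariance is exactly $\Sigma(T,\xi_2)$. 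Eve's system is $E_2=(E_1,z)$.

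\textbf{Step 2 (exact leakage of this attack).} In reverse reconciliation the leakage of this attack is the quantum mutual information $I(y_2;E_1z)$. By the chain rule,
\[
I(y_2;E_1z)=I(y_2;z)+I(y_2;E_1\mid z).
\]
Conditional on $z$, the map $y_1\mapsto y_2$ is a bijection, and $z$ is independent of $(y_1,E_1)$. Hence $I(y_2;E_1\mid z)=I(y_1;E_1)=\chi_{BE}(T,\xi_1)$. The first term is classical and Gaussian:
\[
I(y_2;z)=h(y_2)-h(y_2\mid z)=h(y_2)-h(y_1)=\tfrac12\ln\frac{\Sigma_{22}(\xi_2)}{\Sigma_{22}(\xi_1)}.
\]

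\textbf{Step 3 (comparison).} By Gaussian extremality, $\chi_{BE}(T,\xi_2)$ is the maximum leakage over all attacks compatible with $\Sigma(T,\xi_2)$, and the attack of Step~1 is one of them. This yields the first inequality in~\eqref{eq:monobound}. The second inequality is strict because $\Sigma_{22}$ is strictly increasing in $\xi$ (slope $\eta T>0$). Dividing the bound by $\xi_2-\xi_1$ and letting $\xi_2\downarrow\xi_1$ gives
\[
\partial_\xi\chi_{BE}\ge\partial_\xi\tfrac12\ln\Sigma_{22}=\frac{\eta T}{2\,\Sigma_{22}(\xi)}
\]
wherever the derivative exists. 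Strict monotonicity then makes $\chi_{BE}(T,\cdot)^{-1}$ single-valued.

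\textbf{Main obstacle.} The delicate point is Step~3. We must justify that $\chi_{BE}(T,\xi)$, as defined via the formula of~\cite{GarciaPatron2006}, coincides with the supremum of Eve's information over \emph{all} purifications and attacks consistent with $\Sigma$, including ones where Eve holds extra classical side information. I would argue this by noting that the displaced channel is still a Gaussian channel with a Gaussian dilation: the classical $z$ can be replaced by one quadrature of a two-mode squeezed ancilla, which only increases Eve's information. Its leakage is therefore dominated by the extremal value for that covariance matrix. A secondary check is that the chain rule holds for the hybrid classical--quantum system $(E_1,z)$, which it does since $z$ is classical and independent of $(y_1,E_1)$.
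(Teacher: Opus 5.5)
Your proof is correct and has the same structure as the paper's. It uses the same noise-injection construction with Eve keeping the classical displacement record, the same chain-rule split $I(y_2;E_1z)=I(y_2;z)+I(y_2;E_1\mid z)$, the same Gaussian-shift evaluation $I(y_2;z)=\tfrac12\ln[\Sigma_{22}(\xi_2)/\Sigma_{22}(\xi_1)]$, and the same limit for the derivative bound.

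The one real difference is how you justify $\chi_{BE}(T,\xi_2)\ge I(y_2;E_1z)$. You appeal to Gaussian extremality (optimality of Gaussian collective attacks). That appeal is valid. The theorem bounds Eve's information over all collective attacks with a given covariance matrix, hence over every extension of $\rho_{AB}$. Your phase-insensitive displacement also reproduces the full two-quadrature covariance matrix of $(T,\xi_2)$, not just the $2\times2$ parameter-estimation block.

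However, extremality is heavier than necessary and is not really the operative fact. Your $\rho^{(2)}_{AB}$ is already the Gaussian state in question, so all you need is that a purification dominates every extension. The paper proves exactly this with elementary tools:
\begin{itemize}
\item The Holevo quantity is invariant under the choice of purification, so it evaluates $\chi_{BE}(T,\xi_2)$ on an explicit purification $E_2=(E_1,F_1,F_2)$.
\item This purification comes from a Stinespring isometry of the additive-noise channel that writes two copies of $z$ into $F_1$ and $F_2$.
\item Measuring $F_1$ and discarding $F_2$ is a channel on $E_2$, so data processing yields the inequality.
\end{itemize}

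This resolves your ``main obstacle'' cleanly. Your sketch of replacing $z$ by one quadrature of a two-mode squeezed ancilla is imprecise, since the injected noise acts isotropically on both quadratures. The purification argument is also why the paper can say its proof needs no extremality or symplectic-eigenvalue computation.
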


\begin{proof}
Work in the entanglement-based representation. At fixed $T$, the
channel-level covariance of $\rho^{(i)}_{AB}$ [$i=1,2$, corresponding to
$\xi_i$] has $A$ block $V\mathbb{1}_2$, correlation block
$\sqrt{T(V^2-1)}\,\sigma_z$, and $B$ block $b_i\mathbb{1}_2$ with
$b_2=b_1+s$, $s:=T(\xi_2-\xi_1)>0$; the $A$ block and correlations are
unchanged. Hence
$\rho^{(2)}_{AB}=(\mathrm{id}_A\otimes\mathcal{M}_s)\,\rho^{(1)}_{AB}$,
where $\mathcal{M}_s(\rho)=\int d\mu_s(z)\,D(z)\rho D(z)^\dagger$ is the
additive classical-noise channel on mode $B$, with $\mu_s$ the centered
isotropic Gaussian of variance $s$ per quadrature and $D(z)$ the
displacement operator.

\emph{Step 1 (Purification invariance).} The Holevo bound
$\chi_{BE}(T,\xi)=S(E)-\int dy\,p(y)\,S(\rho_{E|y})$ is evaluated on an
arbitrary purification of $\rho_{AB}$, with $y$ the outcome of Bob's
trusted detection-and-homodyne map applied to $B$; any two purifications
are related by an isometry on $E$, under which both terms are invariant.
We may therefore choose the purification at level 2 freely.

\emph{Step 2 (Noise-injection purification).} Let
$|\Psi_1\rangle_{ABE_1}$ purify $\rho^{(1)}_{AB}$ and let
$V:\mathcal H_B\to\mathcal H_B\otimes\mathcal H_{F_1}\otimes\mathcal H_{F_2}$
be the Stinespring isometry
$V|\phi\rangle=\int\!\sqrt{d\mu_s(z)}\;|z\rangle_{F_1}|z\rangle_{F_2}
\otimes D(z)|\phi\rangle$, with $\{|z\rangle\}$ a Dirac-orthonormal family
[$V^\dagger V=\int d\mu_s(z)\,D(z)^\dagger D(z)=\mathbb{1}$; the
continuum can be regularized by a displacement lattice and a limiting
argument, under which all quantities below are continuous]. Then
$|\Psi_2\rangle:=(\mathbb{1}_{AE_1}\otimes V)|\Psi_1\rangle$ is a
purification of $\rho^{(2)}_{AB}$ with purifying system
$E_2:=(E_1,F_1,F_2)$, so by Step~1 the level-2 Holevo bound may be computed
on $E_2$.

\emph{Step 3 (Data processing).} Let $\Lambda$ be the channel on $E_2$
that measures $F_1$ in the $\{|z\rangle\}$ basis, keeps $E_1$ together with
the classical record $D:=\sqrt{\eta}\,z_q$ (the displacement of the
measured quadrature as seen through the trusted detector of transmissivity
$\eta$), and discards the remainder. Holevo information is nonincreasing
under channels applied to the ensemble states, so
\begin{equation}
\chi_{BE}(T,\xi_2)\;\ge\;I(E_1 D:Y_2),
\label{eq:dpstep}
\end{equation}
the (quantum-classical) mutual information between $(E_1,D)$ and Bob's
level-2 outcome $Y_2$. Conditioned on $D=d$, the level-2 experiment is
exactly the level-1 experiment with its outcome deterministically shifted,
$Y_2=Y_1+d$, and $D\sim\mathcal N(0,\eta s)$ is independent of
$(E_1,Y_1)$.

\emph{Step 4 (Chain rule and Gaussian shift).} By the chain rule,
$I(E_1D:Y_2)=I(D:Y_2)+I(E_1:Y_2\mid D)$. Since given $D=d$ the map
$y_1\mapsto y_1+d$ is a bijective relabeling of a classical outcome,
$I(E_1:Y_2\mid D)=I(E_1:Y_1)=\chi_{BE}(T,\xi_1)$. For the first term,
$Y_1$ and $Y_2$ are centered Gaussians with variances
$\Sigma_{22}(\xi_1)$ and
$\Sigma_{22}(\xi_2)=\Sigma_{22}(\xi_1)+\eta s$, and
$h(Y_2\mid D)=h(Y_1)$ by shift invariance of differential entropy, so
$I(D:Y_2)=h(Y_2)-h(Y_2\mid D)
=\tfrac12\ln[\Sigma_{22}(\xi_2)/\Sigma_{22}(\xi_1)]>0$.
Combining with~\eqref{eq:dpstep} gives~\eqref{eq:monobound}. The derivative
bound follows by taking $\xi_2\downarrow\xi_1$, and single-valuedness of
the inverse is immediate from strict monotonicity.
\end{proof}

\begin{remark}[Interpretation and sharpness]\label{rem:mono}
Proposition~\ref{prop:mono} states that injected channel noise can only
help an eavesdropper who holds its purification, and quantifies by how
much: at least half the log-ratio of Bob's measured variances. The bound is
consistent with the exact large-$\xi$ asymptotics
$\chi_{BE}\sim\ln b$ [coefficient $1$ in nats versus the bound's
$\tfrac12$] and is numerically observed to be valid with strictly positive
slack throughout the operational domain, including nonunit detector
efficiency and nonzero electronic noise; the verification script is
provided in Ref.~\cite{repo}. The argument uses only purification
invariance, data processing of the Holevo quantity, and the Gaussian shift
structure, and therefore applies verbatim at every fixed
$T$---no Gaussian-extremality or symplectic-eigenvalue computation is
required.
\end{remark}

\section{Main Results}
\label{sec:main}
\subsection{Monotonicity, Positivity, and the Monitoring Dichotomy}

Strict monotonicity for general $\mathcal{M}$ will be derived from a single
explicit regularity condition on the landscape of the divergence restricted
to the matching set. Throughout, write
$D(p):=D(\Sigma_0\Vert\Sigma(p))$ for $p\in\mathcal{D}$, and call
$p^\star\in\mathcal{F}_{\mathcal M}$ a \emph{local minimizer of $D$ on
$\mathcal{F}_{\mathcal M}$} if there is a relatively open set
$U\subseteq\mathcal{F}_{\mathcal M}$ with $p^\star\in U$ and
$D(p^\star)\le D(p)$ for all $p\in U$ (relative topology of
$\mathcal{F}_{\mathcal M}$; no interiority in $\mathcal{D}$ is required).

\begin{condition}[No spurious local minima]\label{cond:reg}
The benign point $p_0$ is the only local minimizer of $D$ on
$\mathcal{F}_{\mathcal M}$.
\end{condition}

Condition~\ref{cond:reg} is a statement about a fixed, low-dimensional
landscape---in the GG02 parametrization $\mathcal{F}_{\mathcal M}$ is at
most two-dimensional---so for any given $\mathcal{M}$ it can be certified by
direct inspection or numerically by scanning $D$ on
$\mathcal{F}_{\mathcal M}$. In the trusted-correlation model it is a
theorem:

\begin{proposition}[Condition~\ref{cond:reg} holds under trusted
correlation]\label{prop:condTC}
In the trusted-correlation model, Condition~\ref{cond:reg} holds
unconditionally.
\end{proposition}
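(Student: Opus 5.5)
In the trusted-correlation model, $\mathcal{M}$ fixes $V_A$ and the correlation entry $\sqrt{\eta T}\,V_A$. The plan is to first identify the matching set $\mathcal{F}_{\mathcal M}$ explicitly and then read off the landscape of $D$ on it from Lemma~\ref{lem:uni}.

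\textbf{Step 1: the matching set.} Matching the correlation entry with $\eta$ and $V_A>0$ fixed forces $T=T_0$. This is exactly the inversion step in Assumption~\ref{as:std}(iii), since $T\mapsto\sqrt{\eta T}\,V_A$ is injective. No constraint is placed on $\xi$. Hence
\[
\mathcal{F}_{\mathcal M}=\{T_0\}\times[0,\xi_{\max}],
\]
which is a compact interval parametrized by $\xi$. The relative topology on $\mathcal{F}_{\mathcal M}$ is the usual topology of $[0,\xi_{\max}]$.

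\textbf{Step 2: $D$ along the matching set.} The computation in the proof of Lemma~\ref{lem:uni} does not actually use $\xi\ge\xi_0$. The chain rule gives a zero marginal term and a conditional term $D(\mathcal N(\mu,N_0)\Vert\mathcal N(\mu,N))=\psi(N/N_0)$ for every $\xi\in[0,\xi_{\max}]$. So on all of $\mathcal{F}_{\mathcal M}$ we have
\[
D(T_0,\xi)=\psi(r(\xi)),\qquad r(\xi)=N(T_0,\xi)/N_0 .
\]
Here $r$ is affine and strictly increasing in $\xi$, because $\eta T_0>0$, and $r(\xi_0)=1$.

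\textbf{Step 3: shape of $\psi$ and the conclusion.} The formula $\psi'(r)=(r-1)/(2r^2)$ holds for all $r>0$. It is negative on $(0,1)$ and positive on $(1,\infty)$, so $\psi$ is strictly decreasing on $(0,1]$ and strictly increasing on $[1,\infty)$, with its unique zero at $r=1$. Composing with the strictly increasing affine map $r(\xi)$ shows that $\xi\mapsto D(T_0,\xi)$ is strictly decreasing on $[0,\xi_0]$ and strictly increasing on $[\xi_0,\xi_{\max}]$. Its global minimum value is $0$, attained only at $\xi_0$, so $p_0$ is a local minimizer.

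To rule out any other local minimizer, take $\xi^\star\ne\xi_0$ and suppose first that $\xi^\star>\xi_0$. Every relatively open neighbourhood of $\xi^\star$ in $[0,\xi_{\max}]$ contains points $\xi\in(\xi_0,\xi^\star)$, and at these $D$ is strictly smaller. The case $\xi^\star<\xi_0$ is symmetric, using points in $(\xi^\star,\xi_0)$.

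The endpoints need a separate check, because a strictly monotone function on a closed interval has local minima at endpoints in the relative topology. At $\xi=0<\xi_0$ (when $\xi_0>0$), $D$ is decreasing to the right of $0$, so neighbouring points $\xi>0$ have smaller $D$ and $0$ is not a local minimizer. At $\xi_{\max}>\xi_0$, neighbouring points to the left have smaller $D$. If $\xi_0$ is itself an endpoint, it is $p_0$, which is allowed. Hence $p_0$ is the only local minimizer, and Condition~\ref{cond:reg} holds.

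\textbf{Main obstacle.} The only real subtlety is extending Lemma~\ref{lem:uni} to $\xi<\xi_0$, since the lemma is stated only for $\xi\ge\xi_0$. The plan handles this by re-running its chain-rule computation, which is valid verbatim for all $\xi$, and using the sign of $\psi'$ on $(0,1)$. Everything else reduces to one-dimensional strict unimodality together with the endpoint check above.
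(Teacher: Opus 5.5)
Your proposal is correct and follows the paper's proof essentially step for step. Like the paper, you identify $\mathcal{F}_{\mathcal M}=\{T_0\}\times[0,\xi_{\max}]$, observe that the chain-rule computation of Lemma~\ref{lem:uni} gives $D(T_0,\xi)=\psi(N(T_0,\xi)/N_0)$ for all $\xi$, and use the sign of $\psi'$ to get strict unimodality with $\xi_0$ as the only local minimizer; your explicit neighbourhood and endpoint check simply spells out the paper's parenthetical remark about the relative topology, and it is already covered by your general argument (every neighbourhood of $\xi^\star>\xi_0$, including $\xi^\star=\xi_{\max}$, meets $(\xi_0,\xi^\star)$, and the case $\xi^\star<\xi_0$, including $\xi^\star=0$, is symmetric).
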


\begin{proof}
Here $\mathcal{F}_{\mathcal M}=\{T_0\}\times[0,\xi_{\max}]$, and the
computation in Lemma~\ref{lem:uni} (which nowhere uses $\xi\ge\xi_0$) gives
$D(T_0,\xi)=\psi(N(T_0,\xi)/N_0)$ for all $\xi\in[0,\xi_{\max}]$. Since
$\psi'(r)=\tfrac{r-1}{2r^2}$ is negative on $(0,1)$ and positive on
$(1,\infty)$, and $\xi\mapsto N(T_0,\xi)$ is strictly increasing and affine,
$\xi\mapsto D(T_0,\xi)$ is strictly decreasing on $[0,\xi_0]$ and strictly
increasing on $[\xi_0,\xi_{\max}]$. Its only local minimizer on the segment
(including the endpoints, in the relative topology) is $\xi=\xi_0$, i.e.\
$p_0$.
\end{proof}

\begin{theorem}\label{thm:main}
Under Assumption~\ref{as:std}, on
$\chi_0\in(\chi_{\mathrm{ben}},\chi_{\max}^{\mathcal M})$:
\begin{enumerate}
\item[(i)] $g_{\mathcal M}$ is nondecreasing in $\chi_0$;
\item[(ii)] $g_{\mathcal M}(\chi_0)>0$;
\item[(iii)] in the trusted-correlation model, $g_{\mathcal M}$ is strictly
increasing unconditionally;
\item[(iv)] for general $\mathcal{M}$ satisfying Condition~\ref{cond:reg},
$g_{\mathcal M}$ is strictly increasing.
\end{enumerate}
In cases (iii)--(iv), $g_{\mathcal M}^{-1}$ is single-valued on the
corresponding range. If instead the shot-noise unit is untrusted
[Assumption~\ref{as:std}(iii) dropped], then $g_{\mathcal M}\equiv 0$.
\end{theorem}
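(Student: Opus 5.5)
Parts (i) and (ii) follow directly from the definitions. For (i), the forgery sets are nested: if $\chi_0<\chi_0'$ then $\mathcal{A}_{\mathcal M}(\chi_0')\subseteq\mathcal{A}_{\mathcal M}(\chi_0)$. An infimum over a smaller set cannot be smaller, so $g_{\mathcal M}(\chi_0)\le g_{\mathcal M}(\chi_0')$. For (ii), I first note that $\mathcal{A}_{\mathcal M}(\chi_0)$ is a closed subset of the compact set $\mathcal{F}_{\mathcal M}$, because $\chi_{BE}$ is continuous. It is nonempty because $\chi_0<\chi_{\max}^{\mathcal M}$. The map $p\mapsto D(p)$ is continuous on $\mathcal{D}$, since $\det\Sigma(p)=V_AN\ge V_A(1+v_{\mathrm{el}})>0$. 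Hence the infimum is attained at some $p^\star$. Since $\chi_{BE}(p^\star)\ge\chi_0>\chi_{\mathrm{ben}}=\chi_{BE}(p_0)$, we have $p^\star\ne p_0$. By Assumption~\ref{as:std}(iii) the map $p\mapsto\Sigma(p)$ is injective, so $\Sigma(p^\star)\ne\Sigma_0$. Positivity of the Gaussian Kullback--Leibler divergence between distinct covariances then gives $g_{\mathcal M}(\chi_0)=D(p^\star)>0$.

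For (iii), I use that $\mathcal{F}_{\mathcal M}=\{T_0\}\times[0,\xi_{\max}]$. By Proposition~\ref{prop:mono}, the function $\chi_{BE}(T_0,\cdot)$ is continuous and strictly increasing. Therefore $\mathcal{A}_{\mathcal M}(\chi_0)=\{T_0\}\times[\xi(\chi_0),\xi_{\max}]$, where $\xi(\chi_0):=\chi_{BE}(T_0,\cdot)^{-1}(\chi_0)$. This threshold satisfies $\xi(\chi_0)>\xi_0$ and is strictly increasing in $\chi_0$. By Lemma~\ref{lem:uni}, $D$ is strictly increasing on $[\xi_0,\xi_{\max}]$, so the infimum sits at the left endpoint: $g_{\mathcal M}(\chi_0)=\psi\big(N(T_0,\xi(\chi_0))/N_0\big)$. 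This is a composition of strictly increasing maps, hence strictly increasing. Alternatively, (iii) is the special case of (iv) obtained from Proposition~\ref{prop:condTC}.

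For (iv), I argue by contradiction, which is the step I expect to be the main obstacle. Suppose $\chi_0<\chi_0'$ but $g_{\mathcal M}(\chi_0)=g_{\mathcal M}(\chi_0')$. Let $p^\star$ be a minimizer for $\chi_0'$. Then $p^\star$ also attains $g_{\mathcal M}(\chi_0)$, and $\chi_{BE}(p^\star)\ge\chi_0'>\chi_0$. By continuity of $\chi_{BE}$, there is a relatively open $U\ni p^\star$ in $\mathcal{F}_{\mathcal M}$ on which $\chi_{BE}>\chi_0$. This gives $U\subseteq\mathcal{A}_{\mathcal M}(\chi_0)$, so $D(p^\star)\le D(p)$ for all $p\in U$, and $p^\star$ is a local minimizer of $D$ on $\mathcal{F}_{\mathcal M}$. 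Condition~\ref{cond:reg} forces $p^\star=p_0$. But $\chi_{BE}(p_0)=\chi_{\mathrm{ben}}<\chi_0'$, a contradiction. The care needed is only in using the relative topology and in confirming that the minimizer for $\chi_0'$ really is feasible and optimal at level $\chi_0$; both follow from the nesting in (i). No envelope or differentiability argument is needed. In (iii) and (iv), a strictly increasing function is injective, so $g_{\mathcal M}^{-1}$ is single-valued on the range.

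For the untrusted-shot-noise case, I exhibit an attack in which Eve rescales or biases the calibration so that the observed covariance equals $\Sigma_0$ while the true channel has $\chi_{BE}\ge\chi_0$; this is the standard local-oscillator or calibration attack. Concretely, without a trusted shot-noise unit, the observed $\Sigma$ is no longer a function of the true $(T,\xi)$ alone. For any target $p$, Eve can choose the unmonitored calibration so that the observed statistics coincide with $\Sigma_0$. The divergence between observed statistics is then $0$ for every admissible $\chi_0$, so the infimum is $0$ and $g_{\mathcal M}\equiv0$.
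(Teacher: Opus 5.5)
Your proposal is correct and follows the paper's proof essentially step for step. You use nesting of the forgery sets for (i) and the composition $\chi_0\mapsto\xi^\star\mapsto N\mapsto\psi$ via Lemma~\ref{lem:uni} and Proposition~\ref{prop:mono} for (iii). For (iv) you use the same attainment, local-minimizer and Condition~\ref{cond:reg} contradiction, only folding the paper's Steps~2 and~3 into one argument. The untrusted case uses the same calibration-rescaling argument, at the same informal level as the paper's.

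The one difference is minor and lies in (ii). You use attainment plus injectivity of $p\mapsto\Sigma(p)$. The paper instead argues along a minimizing sequence with the continuous inverse of Assumption~\ref{as:std}(iii), and so needs no attainment.
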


\begin{proof}
\emph{(i)} For $\chi_0'\ge\chi_0$ the constraint $\chi_{BE}\ge\chi_0'$ carves
a subset of the superlevel set $\{\chi_{BE}\ge\chi_0\}$, so
$\mathcal{A}_{\mathcal M}(\chi_0')\subseteq\mathcal{A}_{\mathcal M}(\chi_0)$;
minimizing the fixed objective~\eqref{eq:KL} over a smaller set cannot
decrease the value.

\emph{(ii)} $D(\Sigma_0\Vert\Sigma_1)\ge0$ with equality iff
$\Sigma_1=\Sigma_0$. Suppose $g_{\mathcal M}(\chi_0)=0$; take feasible
$(T_n,\xi_n)$ with $D(\Sigma_0\Vert\Sigma_n)\to0$. For zero-mean Gaussians
of fixed dimension, $D(\Sigma_0\Vert\Sigma_n)\to0$ forces
$\Sigma_n\to\Sigma_0$; by the continuous inverse of
Assumption~\ref{as:std}(iii), $(T_n,\xi_n)\to(T_0,\xi_0)$, whence by
continuity $\chi_{BE}(T_n,\xi_n)\to\chi_{\mathrm{ben}}<\chi_0$, contradicting
feasibility. Hence $g_{\mathcal M}(\chi_0)>0$. (No attainment is needed for
positivity; attainment is established and used only in the
general-$\mathcal{M}$ case (iv) below.)

\emph{(iii), trusted-correlation model.} By Lemma~\ref{lem:uni},
$g_{\mathcal M}(\chi_0)=\psi\big(N(T_0,\xi^\star)/N_0\big)$ where
$\xi^\star=\chi_{BE}^{-1}(\chi_0)\big|_{T_0}$ is well defined because
$\chi_{BE}(T_0,\cdot)$ is strictly increasing
(Proposition~\ref{prop:mono}).
As a composition of strictly increasing maps
($\chi_0\mapsto\xi^\star\mapsto N\mapsto\psi$), $g_{\mathcal M}$ is strictly
increasing; the minimum in~\eqref{eq:gM} is attained at
$\xi=\xi^\star$ (the leakage constraint is active) because $\psi(N/N_0)$ is
increasing in $N$ and the constraint $\chi_{BE}\ge\chi_0$ is equivalent to
$\xi\ge\xi^\star$. This argument uses no envelope theorem and no
critical-point hypothesis.

\emph{(iv), general $\mathcal{M}$ under Condition~\ref{cond:reg}.}
The proof proceeds in three steps and uses neither differentiability of
$g_{\mathcal M}$ or $\chi_{BE}$, nor Lagrange multipliers, constraint
qualifications, uniqueness of minimizers, or any envelope
(Danskin-type) theorem.

\emph{Step 1 (Attainment).} For
$\chi_0\in(\chi_{\mathrm{ben}},\chi_{\max}^{\mathcal M})$ the set
$\mathcal{A}_{\mathcal M}(\chi_0)$ is nonempty (it contains a maximizer of
$\chi_{BE}$ on $\mathcal{F}_{\mathcal M}$) and closed---it is the
intersection of the compact matching set $\mathcal{F}_{\mathcal M}$
with the closed superlevel set $\{\chi_{BE}\ge\chi_0\}$ of the continuous
function $\chi_{BE}$---hence compact. Since $D$ is continuous on
$\mathcal{D}$ [the covariance $\Sigma(p)$ is uniformly positive definite on
the compact $\mathcal{D}$, because $\det\Sigma=V_AN\ge V_A(1+v_{\mathrm{el}})>0$
and $\Sigma_{11}=V_A>0$], the infimum in~\eqref{eq:gM} is attained at some
$p^\star\in\mathcal{A}_{\mathcal M}(\chi_0)$.

\emph{Step 2 (Every minimizer activates the leakage constraint).}
Let $p^\star$ attain~\eqref{eq:gM} at level $\chi_0$ and suppose, for
contradiction, that $\chi_{BE}(p^\star)>\chi_0$. By continuity of
$\chi_{BE}$ there is a relatively open neighborhood
$U\subseteq\mathcal{F}_{\mathcal M}$ of $p^\star$ on which
$\chi_{BE}>\chi_0$; hence $U\subseteq\mathcal{A}_{\mathcal M}(\chi_0)$, and
the optimality of $p^\star$ over $\mathcal{A}_{\mathcal M}(\chi_0)$ gives
$D(p^\star)\le D(p)$ for all $p\in U$. Thus $p^\star$ is a local minimizer
of $D$ on $\mathcal{F}_{\mathcal M}$, and Condition~\ref{cond:reg} forces
$p^\star=p_0$. But then
$\chi_{BE}(p^\star)=\chi_{\mathrm{ben}}<\chi_0$, contradicting feasibility.
Therefore
\begin{equation}
\chi_{BE}(p^\star)=\chi_0
\quad\text{for every minimizer } p^\star \text{ of~\eqref{eq:gM}}.
\label{eq:active}
\end{equation}

\emph{Step 3 (Value equality forces constraint violation).}
Let $\chi_{\mathrm{ben}}<\chi_0<\chi_0'<\chi_{\max}^{\mathcal M}$. Part~(i)
gives $g_{\mathcal M}(\chi_0)\le g_{\mathcal M}(\chi_0')$. Suppose equality
held. By Step~1, pick $p'$ attaining $g_{\mathcal M}(\chi_0')$. Since
$\mathcal{A}_{\mathcal M}(\chi_0')\subseteq\mathcal{A}_{\mathcal M}(\chi_0)$,
the point $p'$ is feasible at level $\chi_0$ with
$D(p')=g_{\mathcal M}(\chi_0')=g_{\mathcal M}(\chi_0)$, so $p'$ is also a
minimizer at level $\chi_0$. Applying~\eqref{eq:active} at level $\chi_0$
yields $\chi_{BE}(p')=\chi_0<\chi_0'$, contradicting
$p'\in\mathcal{A}_{\mathcal M}(\chi_0')$. Hence
$g_{\mathcal M}(\chi_0)<g_{\mathcal M}(\chi_0')$, and $g_{\mathcal M}$ is
strictly increasing on
$(\chi_{\mathrm{ben}},\chi_{\max}^{\mathcal M})$; single-valuedness of
$g_{\mathcal M}^{-1}$ is immediate.

\emph{Untrusted shot-noise unit.}
If Assumption~\ref{as:std}(iii) is dropped, an eavesdropper can rescale
the inferred shot-noise unit [or exploit local-oscillator
manipulation~\cite{Ma2013,Qi2015} or calibration
attacks~\cite{JouguetCalib2013}] so that the \emph{observed} covariance
matrix equals $\Sigma_0$, while the underlying physical channel may have
arbitrarily large Holevo leakage. In this case, there exist points
in $\mathcal{A}_{\mathcal M}(\chi_0)$ for which $D(\Sigma_0 \Vert \Sigma) = 0$,
so the forgery-detectability rate vanishes: $g_{\mathcal M} \equiv 0$.
\end{proof}

\begin{remark}[Why not Danskin]\label{rem:danskin}
An envelope-theorem route to Theorem~\ref{thm:main}(iv) would require, in
addition to attainment: uniqueness of the minimizer, differentiability of
$\chi_{BE}$ with a constraint qualification at the minimizer, and strict
positivity of the associated multiplier---none of which is available here
without further hypotheses ($g_{\mathcal M}$ need not even be differentiable
a priori). The proof above trades all of these for the single topological
Condition~\ref{cond:reg}: strict monotonicity of the \emph{value} follows
from activity of the leakage constraint at \emph{every} minimizer
(Step~2) combined with the nesting of the feasible sets (Step~3), and
Step~2 is exactly where Condition~\ref{cond:reg} enters.
\end{remark}

\begin{remark}[Support for Condition~\ref{cond:reg}]\label{rem:condsupport}
In precision coordinates $K:=\Sigma^{-1}$ the divergence
$D(\Sigma_0\Vert\Sigma)=\tfrac12[\operatorname{tr}(K\Sigma_0)-2-
\ln\det(K\Sigma_0)]$ is strictly convex on the positive definite cone
(the trace term is linear and $-\ln\det$ is strictly convex), with unique
minimizer $K=\Sigma_0^{-1}$. Hence on any subset of parameters whose image
is convex in $K$, Condition~\ref{cond:reg} holds automatically. The matching
set $\mathcal{F}_{\mathcal M}$ fixes entries of $\Sigma$ rather than of $K$,
so this convexity does not transfer verbatim; it does, however, rule out
spurious \emph{interior critical points} whenever the parametrized image of
$\mathcal{F}_{\mathcal M}$ in $K$-coordinates is convex, and it explains why
no spurious minima are observed. Proposition~\ref{prop:condTC} proves the
condition in the trusted-correlation model; for other monitored sets it is
checkable by a one- or two-dimensional scan of $D$ on
$\mathcal{F}_{\mathcal M}$.
\end{remark}

\subsection{Finite-Size Bound and Reusability Rate}

\begin{corollary}[Rigorous non-asymptotic PE-failure bound]\label{cor:finite}
In the trusted-correlation model, let the verifier accept ``benign'' iff the
sample conditional variance
$\hat N=\tfrac1{N_{\mathrm{PE}}}\sum_i(y_i-\tfrac{c}{V_A}x_{A,i})^2$
satisfies $\hat N\le N_0$, where $c=\sqrt{\eta T_0}\,V_A$ is monitored
[hence known, and
$N_{\mathrm{PE}}\hat N/N^\star\sim\chi^2_{N_{\mathrm{PE}}}$ under
an attack with true conditional variance $N^\star$]. Then for \emph{every}
$N_{\mathrm{PE}}$ the missed-detection probability obeys
\begin{equation}
\varepsilon_{\mathrm{PE}}\;\le\;\exp\!\big(-N_{\mathrm{PE}}\,\psi(r^\star)\big),
\qquad r^\star=N^\star/N_0,
\label{eq:chernoff}
\end{equation}
with $\psi$ as in Lemma~\ref{lem:uni}. As $r^\star\to1$,
$\psi(r^\star)=\tfrac14(r^\star-1)^2+O((r^\star-1)^3)$, matching the
leading-order Gaussian confidence-interval scaling of standard finite-size
analyses~\cite{Leverrier2010}; the exact prefactor is given in
Corollary~\ref{cor:BR}.
\end{corollary}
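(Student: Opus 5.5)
The plan is a direct Chernoff bound on the lower tail of a scaled chi-square variable, with the exponent optimized exactly. Under the attack with true conditional variance $N^\star$ and $r^\star=N^\star/N_0>1$, write $n:=N_{\mathrm{PE}}$ and $Q:=n\hat N/N^\star\sim\chi^2_n$. This is a sum of $n$ i.i.d.\ squares of standard normals: the residuals $y_i-\tfrac{c}{V_A}x_{A,i}$ are i.i.d.\ $\mathcal N(0,N^\star)$ because $c$ is known and matched. Missed detection is the event $\hat N\le N_0$, i.e.\ $Q\le n/r^\star$.

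For any $\lambda>0$, Markov's inequality applied to $e^{-\lambda Q}$ gives
\begin{equation*}
\varepsilon_{\mathrm{PE}}=\Pr[Q\le n/r^\star]\le e^{\lambda n/r^\star}\,\mathbb E[e^{-\lambda Q}]
=\exp\!\Big(n\big[\tfrac{\lambda}{r^\star}-\tfrac12\ln(1+2\lambda)\big]\Big).
\end{equation*}
This uses the exact moment generating function $\mathbb E[e^{-\lambda Z^2}]=(1+2\lambda)^{-1/2}$. No asymptotics enter, so the bound holds for every $n$.

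Next we minimize the bracket over $\lambda$. Setting its derivative to zero gives $1+2\lambda=r^\star$, i.e.\ $\lambda=(r^\star-1)/2$, which is positive precisely because $r^\star>1$. Substituting, the bracket equals
\begin{equation*}
\tfrac{r^\star-1}{2r^\star}-\tfrac12\ln r^\star=-\tfrac12\big(\tfrac1{r^\star}-1+\ln r^\star\big)=-\psi(r^\star),
\end{equation*}
which yields \eqref{eq:chernoff}. As a consistency check, this matches Lemma~\ref{lem:uni}: $\psi(r^\star)$ is the one-sample divergence $D(\mathcal N(0,N_0)\Vert\mathcal N(0,N^\star))$, as Sanov/Stein predicts for the tilted boundary. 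For $r^\star\le1$ the bound is trivial, since $\psi\ge0$ gives $\exp(-n\psi(r^\star))\le1$.

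For the near-threshold statement, Taylor expand $\psi$ at $1$ using Lemma~\ref{lem:uni}: $\psi(1)=0$, $\psi'(1)=0$, and $\psi''(1)=\tfrac12$. Hence $\psi(r)=\tfrac14(r-1)^2+O((r-1)^3)$, with the cubic coefficient given by $\psi'''(1)/6$. Compare this with the Gaussian confidence-interval heuristic of~\cite{Leverrier2010}: there $\hat N$ has standard deviation $N^\star\sqrt{2/n}$, so the tail is $\exp(-n(r-1)^2/4)$ to leading order, and the exponents match. I expect no step to be a real obstacle. The only points needing care are (a) justifying that the residuals are exactly i.i.d.\ Gaussian with variance $N^\star$ and independent of $x_A$ under the matched correlation, and (b) keeping the sign conventions straight in the lower-tail Chernoff bound. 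The prefactor refinement is deferred to Corollary~\ref{cor:BR}.
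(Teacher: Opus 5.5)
Your proposal is correct and follows the paper's proof. Both reduce missed detection to the lower-tail event $\tfrac1n\chi^2_n\le 1/r^\star$, apply the chi-squared Chernoff bound at $t=1/r^\star$, and read off the near-threshold expansion from $\psi(1)=\psi'(1)=0$, $\psi''(1)=\tfrac12$. The only difference is that you derive the Chernoff bound explicitly, via Markov's inequality on $e^{-\lambda Q}$ with the optimal tilt $\lambda=(r^\star-1)/2$, whereas the paper simply quotes it.
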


\begin{proof}
Missed detection is the event $\hat N\le N_0$ under $H_1$, i.e.\
$\tfrac1{N_{\mathrm{PE}}}\chi^2_{N_{\mathrm{PE}}}\le N_0/N^\star=1/r^\star<1$.
The chi-squared Chernoff bound
$\Pr[\tfrac1n\chi^2_n\le t]\le\exp(-\tfrac n2(t-1-\ln t))$ for $t<1$, applied
at $t=1/r^\star$, yields~\eqref{eq:chernoff}, since
$\tfrac12(1/r^\star-1+\ln r^\star)=\psi(r^\star)$. This is the sharp Stein
exponent $D(\Sigma_0\Vert\Sigma_1)$ of Lemma~\ref{lem:uni}; the false-abort
(Type-I) probability of this fixed threshold is $O(1)$ and, if a smaller
Type-I level is required, shifting the threshold changes only the prefactor,
not the exponent [see Lemma~\ref{lem:twosided} for the thresholded version].
The expansion follows from $\psi(1)=\psi'(1)=0$, $\psi''(1)=\tfrac12$.
\end{proof}

\begin{corollary}[Certificate reusability rate]\label{cor:reuse}
In the trusted-correlation model, the maximal Holevo leakage compatible with
a benign certificate surviving $N_{\mathrm{PE}}$ rounds at confidence
$1-\delta$ is
\begin{equation}
\chi_{BE}^{\max}(N_{\mathrm{PE}},\delta)=
g_{\mathcal M}^{-1}\!\left(\tfrac{1}{N_{\mathrm{PE}}}\ln\tfrac1\delta\right).
\label{eq:reuse}
\end{equation}
\end{corollary}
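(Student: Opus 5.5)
The plan is to combine the finite-size bound of Corollary~\ref{cor:finite} with the explicit form of $g_{\mathcal M}$ from Theorem~\ref{thm:main}(iii), and then invert using strict monotonicity. First we make ``surviving at confidence $1-\delta$'' precise. An attack with true conditional variance $N^\star$ survives $N_{\mathrm{PE}}$ rounds with non-negligible probability only if its missed-detection probability exceeds $\delta$. So the certificate is compromised at confidence $1-\delta$ exactly when $\exp(-N_{\mathrm{PE}}\psi(r^\star))\ge\delta$, using the bound~\eqref{eq:chernoff} as the operative exponent. Equivalently, the condition is $\psi(r^\star)\le \tfrac1{N_{\mathrm{PE}}}\ln\tfrac1\delta$. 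Attacks with $\psi(r^\star)>\tfrac1{N_{\mathrm{PE}}}\ln\tfrac1\delta$ are caught except with probability below $\delta$. Hence the set of leakages compatible with an undetected certificate is $\{\chi_{BE}(T_0,\xi): \psi(N(T_0,\xi)/N_0)\le \tfrac1{N_{\mathrm{PE}}}\ln\tfrac1\delta\}$.

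Second, we identify this set through $g_{\mathcal M}$. By Lemma~\ref{lem:uni}, $D(\Sigma_0\Vert\Sigma(T_0,\xi))=\psi(N/N_0)$. The proof of Theorem~\ref{thm:main}(iii) shows that $g_{\mathcal M}(\chi_0)=\psi(N(T_0,\xi^\star)/N_0)$ with $\xi^\star=\chi_{BE}(T_0,\cdot)^{-1}(\chi_0)$, and this is well defined by Proposition~\ref{prop:mono}. It follows that for $\xi\ge\xi_0$ we have $g_{\mathcal M}(\chi_{BE}(T_0,\xi))=D(\Sigma_0\Vert\Sigma(T_0,\xi))$. In words, every attack point is itself the minimizer at its own leakage level, since the constraint is active.

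The survival condition therefore reads $g_{\mathcal M}(\chi_{BE})\le \tfrac1{N_{\mathrm{PE}}}\ln\tfrac1\delta$. Because $g_{\mathcal M}$ is strictly increasing and continuous on $(\chi_{\mathrm{ben}},\chi_{\max}^{\mathcal M})$, the largest such $\chi_{BE}$ is $g_{\mathcal M}^{-1}(\tfrac1{N_{\mathrm{PE}}}\ln\tfrac1\delta)$. Continuity comes from composing the continuous maps $\psi$, $N$, and the inverse of a continuous strictly increasing function. Strict monotonicity comes from Theorem~\ref{thm:main}(iii). Together these give~\eqref{eq:reuse}.

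Points with $\xi<\xi_0$ carry leakage below $\chi_{\mathrm{ben}}$ by Proposition~\ref{prop:mono}, so they are irrelevant to the maximum. We also need to handle the boundary cases. If $\tfrac1{N_{\mathrm{PE}}}\ln\tfrac1\delta$ exceeds $g_{\mathcal M}(\chi_{\max}^{\mathcal M})$, the inverse is read as $\chi_{\max}^{\mathcal M}$, and we would note this convention. As $\delta\to1$ or $N_{\mathrm{PE}}\to\infty$, the argument tends to $0$ and the bound tends to $\chi_{\mathrm{ben}}$.

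The main obstacle is the interpretive step: whether ``maximal leakage compatible'' is meant as the exact threshold or only as a sufficient bound. Corollary~\ref{cor:finite} gives only an upper bound on $\varepsilon_{\mathrm{PE}}$. So leakage above the stated value is rigorously excluded at confidence $1-\delta$, while achievability of the value holds at the exponent level. I would handle this by invoking the Stein lemma interpretation of $g_{\mathcal M}$ as the sharp exponent; the prefactor-level tightness is left to Corollary~\ref{cor:BR}.
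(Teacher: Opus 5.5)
Your proposal is correct and follows the paper's proof. You bound the survival probability by $e^{-N_{\mathrm{PE}}g_{\mathcal M}(\chi_0)}$ using Corollary~\ref{cor:finite} together with the identification $g_{\mathcal M}(\chi_0)=\psi(r^\star)$ from Theorem~\ref{thm:main}(iii), set this bound equal to $\delta$, and invert the strictly increasing $g_{\mathcal M}$, exactly as the paper does. Two of your additions are sound refinements of the paper's terser argument: the continuity of $g_{\mathcal M}$, which the paper tacitly needs for its range to be an interval, and your observation that the Chernoff bound rigorously excludes only leakage above $g_{\mathcal M}^{-1}(\cdot)$, with tightness holding at the exponent level.
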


\begin{proof}
By Corollary~\ref{cor:finite} an attack with leakage $\chi_0$ [hence
$r^\star=N^\star(\chi_0)/N_0$ and, by Theorem~\ref{thm:main}(iii),
$g_{\mathcal M}(\chi_0)=\psi(r^\star)$] survives $N_{\mathrm{PE}}$ rounds with
probability at most $e^{-N_{\mathrm{PE}}g_{\mathcal M}(\chi_0)}$. Setting this
to $\delta$ and inverting the strictly increasing $g_{\mathcal M}$
gives~\eqref{eq:reuse}; the map is well defined for
$\tfrac1{N_{\mathrm{PE}}}\ln\tfrac1\delta\in
\big(0,\,\sup_{\chi_0<\chi_{\max}^{\mathcal M}}g_{\mathcal M}(\chi_0)\big)$,
the range of $g_{\mathcal M}$ on
$(\chi_{\mathrm{ben}},\chi_{\max}^{\mathcal M})$.
\end{proof}

\begin{corollary}[Bahadur--Rao refinement]\label{cor:BR}
In the trusted-correlation model the missed-detection probability of
Corollary~\ref{cor:finite} has the exact asymptotic form
\begin{equation}
\begin{split}
\varepsilon_{\mathrm{PE}}={}&\frac{r^\star}{(r^\star-1)\sqrt{\pi N_{\mathrm{PE}}}}\,
e^{-N_{\mathrm{PE}}\,\psi(r^\star)}\,\big(1+o(1)\big),\\
&N_{\mathrm{PE}}\to\infty .
\end{split}
\label{eq:BR}
\end{equation}
Hence the Chernoff bound~\eqref{eq:chernoff} is conservative by the factor
$\tfrac{r^\star-1}{r^\star}\sqrt{\pi N_{\mathrm{PE}}}$, growing as
$\sqrt{N_{\mathrm{PE}}}$, and the effective exponent obeys
$-N_{\mathrm{PE}}^{-1}\ln\varepsilon_{\mathrm{PE}}
=\psi(r^\star)+\tfrac{\ln N_{\mathrm{PE}}}{2N_{\mathrm{PE}}}
+O(N_{\mathrm{PE}}^{-1})$, approaching the Stein exponent from above.
\end{corollary}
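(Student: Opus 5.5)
We need the exact lower-tail asymptotics of the chi-squared distribution: $\varepsilon_{\mathrm{PE}}=\Pr[\chi^2_n\le n t]$ with $n=N_{\mathrm{PE}}$ and $t=1/r^\star<1$ fixed. The plan is a direct Laplace-type evaluation of the $\chi^2_n$ density integral, which we expect to be cleaner than invoking the general Bahadur--Rao theorem for lattice-free sums. The general theorem would also work: the tilting parameter is $\lambda^\star=(1-t)/(2t)\cdot$(sign) and the tilted variance is $2t^2$, giving prefactor $1/(|\lambda^\star|\sqrt{2\pi n}\,\sqrt{2t^2})$. We will use it as a cross-check on the constant.

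\emph{Step 1 (exact integral).} Write
\[
\varepsilon_{\mathrm{PE}}=\int_0^{nt}\frac{x^{n/2-1}e^{-x/2}}{2^{n/2}\Gamma(n/2)}\,dx .
\]
Substitute $x=nu$ to get
\[
\varepsilon_{\mathrm{PE}}=\frac{(n/2)^{n/2}}{\Gamma(n/2)}\int_0^t u^{-1}\,e^{-\frac n2(u-\ln u)}\,du .
\]
The exponent $\phi(u)=\tfrac12(u-\ln u)$ is strictly decreasing on $(0,1)$, so the integral is dominated by the endpoint $u=t$, where $\phi'(t)=\tfrac12(1-1/t)<0$.

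\emph{Step 2 (endpoint Laplace asymptotics).} The standard boundary Laplace lemma gives
\[
\int_0^t u^{-1}e^{-n\phi(u)}\,du=\frac{t^{-1}e^{-n\phi(t)}}{n|\phi'(t)|}\,(1+o(1)).
\]
One must check that the region near $u=0$ is negligible. There the integrand is at most $u^{n/2-1}$ times bounded factors, which is exponentially smaller than the endpoint contribution. Since $t^{-1}/|\phi'(t)|=2/(1-t)$, the integral equals $\frac{2}{n(1-t)}e^{-n\phi(t)}(1+o(1))$.

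\emph{Step 3 (Stirling).} Use $\Gamma(n/2)=\sqrt{4\pi/n}\,(n/2)^{n/2}e^{-n/2}(1+o(1))$, so that $(n/2)^{n/2}/\Gamma(n/2)=\sqrt{n/(4\pi)}\,e^{n/2}(1+o(1))$. Collecting terms,
\[
\varepsilon_{\mathrm{PE}}=\frac{1}{(1-t)\sqrt{\pi n}}\,e^{-n(\phi(t)-1/2)}(1+o(1)).
\]
With $t=1/r^\star$ we have $\phi(t)-\tfrac12=\tfrac12(1/r^\star-1+\ln r^\star)=\psi(r^\star)$ and $1/(1-t)=r^\star/(r^\star-1)$, which is exactly~\eqref{eq:BR}.

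\emph{Step 4 (consequences).} Dividing~\eqref{eq:chernoff} by~\eqref{eq:BR} gives the conservativeness factor $\tfrac{r^\star-1}{r^\star}\sqrt{\pi N_{\mathrm{PE}}}$. Taking $-\tfrac1n\ln$ of~\eqref{eq:BR} yields $\psi(r^\star)+\tfrac{\ln n}{2n}+\tfrac1n\ln\!\big(\tfrac{(r^\star-1)\sqrt\pi}{r^\star}\big)+o(1/n)$; the constant term is $O(1/n)$, which gives the stated expansion approaching $\psi$ from above.

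The main obstacle is making Step 2 rigorous with the right uniformity. The endpoint is fixed because $t$ is fixed, so a Watson-type argument suffices: split the integral at $t-\delta_n$ with $\delta_n=n^{-2/3}$, Taylor-expand $\phi$ on the short piece, and bound the remainder by monotonicity of $\phi$. Getting the constant right through Stirling is routine but error-prone, which is why we keep the general Bahadur--Rao prefactor as an independent check.
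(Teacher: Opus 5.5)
Your argument is correct, and it reaches the same constant by a genuinely different route from the paper's.

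\textbf{The paper's route.} The paper never touches the density. It invokes the Bahadur--Rao theorem for the mean of the i.i.d.\ summands, with cumulant generating function $\Lambda(\theta)=-\tfrac12\ln(1-2\theta)$, tilt $\theta^\star=\tfrac12(1-1/t)$ and tilted variance $\Lambda''(\theta^\star)=2t^2$. It then reads off the prefactor $\big(|\theta^\star|\sqrt{2\pi N_{\mathrm{PE}}\Lambda''(\theta^\star)}\big)^{-1}=\big((1-t)\sqrt{\pi N_{\mathrm{PE}}}\big)^{-1}$. This is exactly the computation you kept as a cross-check.

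\textbf{Your route.} Your main proof uses the closed-form Gamma density instead. After $x=nu$ the tail is an endpoint-dominated Laplace integral, since $\phi'(t)=-\tfrac{1-t}{2t}\neq0$. Stirling's factor $e^{n/2}$ then combines with $\phi(t)$ to give $\psi(r^\star)$.

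\textbf{What each buys.} The paper's route is shorter, and it is the one that carries over to the Wishart-type statistics of Remark~\ref{rem:scope}. However, it is a black box whose hypotheses are left implicit: non-lattice summands, a tilt in the interior of the moment-generating-function domain, and application to $-Z_i^2$ to handle a lower tail. Your route is specific to the chi-squared law, but it is fully self-contained and makes the $o(1)$ quantitatively controllable. Your split at $t-n^{-2/3}$ goes through for two reasons. Convexity of your $\phi$ gives $\phi(t-\delta_n)\ge\phi(t)+|\phi'(t)|\delta_n$. The identity $u^{-1}e^{-n\phi(u)}=e^{-u}e^{-(n-2)\phi(u)}$ disposes of the $u^{-1}$ factor near $0$. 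The same computation would therefore also give higher-order corrections or explicit two-sided bounds.

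\textbf{A minor sharpening.} The effective exponent lies above $\psi(r^\star)$ at every $N_{\mathrm{PE}}$, directly from~\eqref{eq:chernoff}. Your expansion alone shows this only for large $N_{\mathrm{PE}}$, because the constant $\ln\frac{(r^\star-1)\sqrt{\pi}}{r^\star}$ is negative for $r^\star$ near $1$.
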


\begin{proof}
The event is $\{\tfrac1N\chi^2_N\le t\}$ with $t=1/r^\star<1$
($N\equiv N_{\mathrm{PE}}$). With cumulant generating function
$\Lambda(\theta)=-\tfrac12\ln(1-2\theta)$, the exponential tilt solving
$\Lambda'(\theta^\star)=t$ is $\theta^\star=\tfrac12(1-1/t)<0$, with
$\Lambda''(\theta^\star)=2t^2$. The Bahadur--Rao expansion
$\Pr[\tfrac1N\chi^2_N\le t]=
\big(|\theta^\star|\sqrt{2\pi N\,\Lambda''(\theta^\star)}\big)^{-1}
e^{-N I(t)}(1+o(1))$, with rate $I(t)=\tfrac12(t-1-\ln t)=\psi(r^\star)$ and
$|\theta^\star|\sqrt{\Lambda''(\theta^\star)}=\tfrac{1-t}{2t}\cdot t\sqrt2
=\tfrac{1-t}{\sqrt2}$, gives the prefactor
$\big((1-t)\sqrt{\pi N}\big)^{-1}=r^\star/\big((r^\star-1)\sqrt{\pi N}\big)$.
The effective-exponent expansion follows by taking $-\tfrac1N\ln$
of~\eqref{eq:BR}.
\end{proof}

\begin{remark}[Numerical validation]\label{rem:valid}
The missed-detection probability is the exact chi-squared tail
$\varepsilon_{\mathrm{PE}}=F_{\chi^2_{N_{\mathrm{PE}}}}(N_{\mathrm{PE}}/r^\star)$, so
Corollaries~\ref{cor:finite} and~\ref{cor:BR} may be verified in closed form
without sampling. Doing so confirms that the bound~\eqref{eq:chernoff} holds
at every $N_{\mathrm{PE}}$ and is conservative by the $\sqrt{N_{\mathrm{PE}}}$
factor of Corollary~\ref{cor:BR}---about $10\times$ at $N_{\mathrm{PE}}=10^3$
and $21\times$ at $N_{\mathrm{PE}}=5\times10^3$ for $r^\star=1.2$---and that
the effective exponent $-N_{\mathrm{PE}}^{-1}\ln\varepsilon_{\mathrm{PE}}$
decreases to $\psi(r^\star)$ from above; the conservatism is in the direction
of security. Monte-Carlo estimates agree until the sampling floor
$\varepsilon_{\mathrm{PE}}\lesssim 1/N_{\mathrm{trials}}$, below which the
closed-form tail should be used.
\end{remark}

\section{Composable Finite-Size Integration}
\label{sec:composable}
We now show where the forgery-detectability rate enters a complete
composable finite-size security statement. The architecture is the standard
one for CV-QKD under collective
attacks~\cite{Renner2005,Furrer2012,Leverrier2015}: a protocol is
$\varepsilon$-secure, with
$\varepsilon=\varepsilon_{\mathrm{cor}}+\varepsilon_{\mathrm{sec}}$, if it is
$\varepsilon_{\mathrm{cor}}$-correct and
$\varepsilon_{\mathrm{sec}}$-secret, and the secrecy parameter decomposes as
$\varepsilon_{\mathrm{sec}}=\varepsilon_{\mathrm{PE}}
+2\varepsilon_{\mathrm{sm}}+\bar\varepsilon$, where
$\varepsilon_{\mathrm{PE}}$ is the probability that parameter estimation
accepts a channel outside the assumed confidence region,
$\varepsilon_{\mathrm{sm}}$ is the smoothing parameter of the smooth
min-entropy, and $\bar\varepsilon$ is the leftover-hash failure
probability~\cite{Renner2005,Leverrier2015}. The two contributions of the
present framework are (a) $\varepsilon_{\mathrm{PE}}$ is bounded by the
Stein exponent of Corollary~\ref{cor:finite} at every block length, and (b)
the worst-case Holevo term against which privacy amplification must be run
is exactly the certificate reusability rate of Corollary~\ref{cor:reuse}.

Throughout this section we work in the trusted-correlation model of
Assumption~\ref{as:std}, with $m:=N_{\mathrm{PE}}$ quadrature measurements
sacrificed for parameter estimation and $n$ retained for key generation.
Define, for $t>0$,
\begin{equation}
\varphi(t):=\tfrac12\big(t-1-\ln t\big),
\qquad \text{so that } \psi(r)=\varphi(1/r).
\label{eq:phi}
\end{equation}

\subsection{A Two-Sided Acceptance Test}

Corollary~\ref{cor:finite} used the threshold $\hat N\le N_0$, whose
Type-I (false-abort) probability is $O(1)$. An operational protocol
requires a strictly positive margin. Fix $t_{\mathrm{thr}}\ge1$ and let the
verifier accept iff
\begin{equation}
\hat N\;\le\;t_{\mathrm{thr}}\,N_0 .
\label{eq:test}
\end{equation}

\begin{lemma}[Two-sided exponents]\label{lem:twosided}
For the test~\eqref{eq:test} in the trusted-correlation model:
\begin{enumerate}
\item[(a)] \emph{(Robustness.)} Under honest operation ($N^\star=N_0$), if
$t_{\mathrm{thr}}>1$ the abort probability obeys
$\varepsilon_{\mathrm{rob}}\le\exp\!\big(-m\,\varphi(t_{\mathrm{thr}})\big)$.
\item[(b)] \emph{(Missed detection.)} Under an attack with
$r^\star=N^\star/N_0>t_{\mathrm{thr}}$, the acceptance probability obeys
\begin{equation}
\varepsilon_{\mathrm{PE}}(r^\star)\;\le\;
\exp\!\Big(-m\,\psi\big(r^\star/t_{\mathrm{thr}}\big)\Big).
\label{eq:twosided}
\end{equation}
\end{enumerate}
\end{lemma}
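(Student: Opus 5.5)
Both parts reduce to Chernoff bounds for the normalized chi-squared statistic. In the trusted-correlation model $c$ and $V_A$ are known, so each residual $y_i-\tfrac{c}{V_A}x_{A,i}$ is an i.i.d.\ centered Gaussian with variance $N^\star$, the true conditional variance. Hence $m\hat N/N^\star\sim\chi^2_m$, exactly as stated in Corollary~\ref{cor:finite}. The plan is to write each error event as a one-sided deviation of $\tfrac1m\chi^2_m$ from its mean $1$, and to apply the appropriate tail of the Chernoff bound with the rate $\varphi(t)=\tfrac12(t-1-\ln t)$ from~\eqref{eq:phi}.

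For part (a), under honest operation $N^\star=N_0$, an abort is the event $\hat N>t_{\mathrm{thr}}N_0$, that is, $\tfrac1m\chi^2_m>t_{\mathrm{thr}}$ with $t_{\mathrm{thr}}>1$. I will prove the upper-tail bound $\Pr[\tfrac1m\chi^2_m\ge t]\le\exp(-m\varphi(t))$ for $t>1$. By Markov's inequality applied to $e^{\theta\chi^2_m}$ with $0<\theta<\tfrac12$, we get $\Pr\le\exp(-m[\theta t-\Lambda(\theta)])$, where $\Lambda(\theta)=-\tfrac12\ln(1-2\theta)$. The optimal choice is $\theta^\star=\tfrac12(1-1/t)\in(0,\tfrac12)$, and substituting it gives exactly $\theta^\star t-\Lambda(\theta^\star)=\tfrac12(t-1)-\tfrac12\ln t=\varphi(t)$. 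This is the mirror image of the lower-tail computation used in Corollaries~\ref{cor:finite} and~\ref{cor:BR}. The only point needing care is that $\theta^\star$ lies in the domain where the moment generating function is finite, which holds because $t>1$.

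For part (b), under attack, acceptance is the event $\hat N\le t_{\mathrm{thr}}N_0$, that is, $\tfrac1m\chi^2_m\le t_{\mathrm{thr}}N_0/N^\star=t_{\mathrm{thr}}/r^\star=:t$, with $t<1$ because $r^\star>t_{\mathrm{thr}}$. The lower-tail Chernoff bound already invoked in the proof of Corollary~\ref{cor:finite} gives $\Pr\le\exp(-m\varphi(t))$. By~\eqref{eq:phi}, $\varphi(t_{\mathrm{thr}}/r^\star)=\psi(r^\star/t_{\mathrm{thr}})$, which is~\eqref{eq:twosided}. For self-containment I would rederive that bound in one line: take $\theta<0$, so the optimizer $\theta^\star=\tfrac12(1-1/t)<0$, and the same algebra yields $\varphi(t)$. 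It is also worth noting that $\psi(r^\star/t_{\mathrm{thr}})>0$ by Lemma~\ref{lem:uni}, since $r^\star/t_{\mathrm{thr}}>1$, so the bound is nontrivial.

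The main obstacle is only bookkeeping. One must track the direction of each inequality and the sign of the tilt parameter in the two regimes, and confirm the identity $\varphi(1/r)=\psi(r)$ under rescaling by $t_{\mathrm{thr}}$. The exact distributional claim $m\hat N/N^\star\sim\chi^2_m$ relies on $c$ being known exactly rather than estimated, and this is precisely what the trusted-correlation model supplies.
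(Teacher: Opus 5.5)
Your proposal is correct and follows the paper's route: both proofs reduce the abort and acceptance events to the upper and lower tails of $\tfrac1m\chi^2_m$, apply the chi-squared Chernoff bound with rate $\varphi$, and finish with the identity $\varphi(t_{\mathrm{thr}}/r^\star)=\psi(r^\star/t_{\mathrm{thr}})$. The only difference is that you derive both tail bounds explicitly via the optimal tilt $\theta^\star=\tfrac12(1-1/t)$ and check its sign in each regime, whereas the paper simply cites the two bounds.
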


\begin{proof}
Both statements are the two tails of the same chi-squared Chernoff bound.
Since $m\hat N/N^\star\sim\chi^2_m$, acceptance is the event
$\tfrac1m\chi^2_m\le t_{\mathrm{thr}}N_0/N^\star=t_{\mathrm{thr}}/r^\star$
and abort under honest operation is
$\tfrac1m\chi^2_m> t_{\mathrm{thr}}$. The bounds
$\Pr[\tfrac1m\chi^2_m\le t]\le e^{-m\varphi(t)}$ for $t<1$ and
$\Pr[\tfrac1m\chi^2_m\ge t]\le e^{-m\varphi(t)}$ for $t>1$ give (a) with
$t=t_{\mathrm{thr}}$ and (b) with $t=t_{\mathrm{thr}}/r^\star$, using
$\varphi(t_{\mathrm{thr}}/r^\star)=\psi(r^\star/t_{\mathrm{thr}})$
from~\eqref{eq:phi}.
\end{proof}

\subsection{Worst-Case Leakage Compatible with Acceptance}

\begin{definition}[Worst-case parameter and leakage]\label{def:wc}
For a target $\varepsilon_{\mathrm{PE}}\in(0,1)$, let
$\psi^{-1}$ denote the inverse of $\psi$ on $[1,\infty)$ and set
\begin{align}
r_{\mathrm{wc}}(m,\varepsilon_{\mathrm{PE}})
&:= t_{\mathrm{thr}}\,
\psi^{-1}\!\Big(\tfrac1m\ln\tfrac1{\varepsilon_{\mathrm{PE}}}\Big),
\label{eq:rwc}\\
\chi_{\mathrm{wc}}(m,\varepsilon_{\mathrm{PE}})
&:= \chi_{BE}\big(T_0,\xi_{\mathrm{wc}}\big),
\quad
N(T_0,\xi_{\mathrm{wc}})=r_{\mathrm{wc}}N_0 ,
\label{eq:chiwc}
\end{align}
whenever $r_{\mathrm{wc}}$ lies in the physical range
[$\xi_{\mathrm{wc}}\le\xi_{\max}$]; otherwise set
$\chi_{\mathrm{wc}}:=\chi_{\max}^{\mathcal M}$.
\end{definition}

By Lemma~\ref{lem:twosided}(b), every attack with
$r^\star\ge r_{\mathrm{wc}}$ is detected except with probability at most
$\varepsilon_{\mathrm{PE}}$; by the strict monotonicity of
$\chi_{BE}(T_0,\cdot)$ (Proposition~\ref{prop:mono}), every attack that
evades this guarantee satisfies
$\chi_{BE}\le\chi_{\mathrm{wc}}$. For $t_{\mathrm{thr}}=1$ the quantity
$\chi_{\mathrm{wc}}$ is precisely the certificate reusability rate
$\chi_{BE}^{\max}(m,\varepsilon_{\mathrm{PE}})$ of
Corollary~\ref{cor:reuse}: the reusability rate \emph{is} the worst-case
Holevo term of the composable analysis.

\subsection{Composable Key-Length Statement}

\begin{theorem}[Composable integration, trusted-correlation
model]\label{thm:composable}
Consider the GG02 protocol of Sec.~\ref{sec:setup} with acceptance
test~\eqref{eq:test}, under collective Gaussian attacks and
Assumption~\ref{as:std}, with parameters
$\varepsilon_{\mathrm{PE}},\varepsilon_{\mathrm{sm}},\bar\varepsilon,
\varepsilon_{\mathrm{cor}}\in(0,1)$. If privacy amplification extracts
\begin{align}
\ell \;\le\;& \;
n\big[\beta I_{AB}(T_0,\xi_0)-\chi_{\mathrm{wc}}(m,\varepsilon_{\mathrm{PE}})\big]
\nonumber\\
&-\sqrt{n}\,\Delta_{\mathrm{AEP}}(\varepsilon_{\mathrm{sm}},d)
-2\log_2\tfrac1{2\bar\varepsilon}-\log_2\tfrac1{\varepsilon_{\mathrm{cor}}}
\label{eq:keylength}
\end{align}
bits, where $\Delta_{\mathrm{AEP}}(\varepsilon_{\mathrm{sm}},d)$ is the
asymptotic-equipartition correction for the $d$-dimensional discretized
key map of Ref.~\cite{Leverrier2015}, then the protocol is
$\varepsilon_{\mathrm{cor}}$-correct and
$\varepsilon_{\mathrm{sec}}$-secret with
\begin{equation}
\varepsilon_{\mathrm{sec}}
=\varepsilon_{\mathrm{PE}}+2\varepsilon_{\mathrm{sm}}+\bar\varepsilon ,
\label{eq:epssec}
\end{equation}
hence $\varepsilon$-secure with
$\varepsilon=\varepsilon_{\mathrm{cor}}+\varepsilon_{\mathrm{sec}}$.
Its robustness obeys
$\varepsilon_{\mathrm{rob}}\le\exp(-m\,\varphi(t_{\mathrm{thr}}))$.
\end{theorem}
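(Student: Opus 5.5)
We plan to prove Theorem~\ref{thm:composable} by reduction to the standard composable leftover-hash argument of Refs.~\cite{Renner2005,Furrer2012,Leverrier2015}. The only work is to show that the parameter-estimation step of that argument is realised by the acceptance test~\eqref{eq:test} with the Holevo term $\chi_{\mathrm{wc}}$ and failure probability $\varepsilon_{\mathrm{PE}}$. Correctness, robustness and the secrecy bookkeeping are then inherited.

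\emph{Robustness and correctness.} Robustness is Lemma~\ref{lem:twosided}(a): under honest operation $N^\star=N_0$, and the abort probability is at most $\exp(-m\varphi(t_{\mathrm{thr}}))$. Correctness is the usual error-verification step. Alice and Bob compare a universal hash of length $\log_2(1/\varepsilon_{\mathrm{cor}})$ of the reconciled strings, which accounts for the $-\log_2(1/\varepsilon_{\mathrm{cor}})$ term in~\eqref{eq:keylength}. This step is independent of the present framework.

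\emph{PE step.} Under collective Gaussian attacks and Assumption~\ref{as:std}, each round is i.i.d.\ with some channel $p^\star=(T_0,\xi^\star)$. The matching constraint pins $T=T_0$ (trusted-correlation model), so the channel is fully specified by $r^\star=N(T_0,\xi^\star)/N_0$. We introduce the bad event $\mathcal B:=\{\text{accept and }\chi_{BE}(p^\star)>\chi_{\mathrm{wc}}\}$. By Proposition~\ref{prop:mono}, together with the affine strictly increasing map $\xi\mapsto N$, the condition $\chi_{BE}(p^\star)>\chi_{\mathrm{wc}}$ is equivalent to $r^\star>r_{\mathrm{wc}}$. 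If $r_{\mathrm{wc}}$ is unphysical, then $\chi_{\mathrm{wc}}=\chi_{\max}^{\mathcal M}$ and the event is empty. For $r^\star>r_{\mathrm{wc}}\ge t_{\mathrm{thr}}$, Lemma~\ref{lem:twosided}(b) together with monotonicity of $\psi$ on $[1,\infty)$ (Lemma~\ref{lem:uni}) gives
\[
\Pr[\text{accept}]\le e^{-m\psi(r^\star/t_{\mathrm{thr}})}\le e^{-m\psi(r_{\mathrm{wc}}/t_{\mathrm{thr}})}=\varepsilon_{\mathrm{PE}},
\]
where the last equality is Definition~\ref{def:wc}. Thus, except with probability $\varepsilon_{\mathrm{PE}}$, acceptance certifies $\chi_{BE}\le\chi_{\mathrm{wc}}$. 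The mutual information term uses $I_{AB}(T_0,\xi_0)$, which is legitimate because the reconciliation code is designed for the benign point: $\beta I_{AB}$ is the leaked-syndrome accounting fixed by the protocol. An error-correction failure is caught by the correctness hash, so it does not enter secrecy.

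\emph{Privacy amplification.} Conditioned on acceptance and $\neg\mathcal B$, we apply the asymptotic equipartition property to the discretised key map in the i.i.d.\ collective setting. This gives
\[
H_{\min}^{\varepsilon_{\mathrm{sm}}}(X^n|E)\ge n\,H(X|E)-\sqrt n\,\Delta_{\mathrm{AEP}}(\varepsilon_{\mathrm{sm}},d).
\]
Gaussian extremality then gives $H(X|E)\ge H(X)-\chi_{BE}\ge H(X)-\chi_{\mathrm{wc}}$. We subtract the reconciliation leakage $n[H(X)-\beta I_{AB}]$ and the verification hash. The leftover hash lemma then yields $\bar\varepsilon+2\varepsilon_{\mathrm{sm}}$ secrecy for $\ell$ as in~\eqref{eq:keylength}. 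A union bound with $\Pr[\mathcal B]\le\varepsilon_{\mathrm{PE}}$ gives~\eqref{eq:epssec}, and composability gives $\varepsilon=\varepsilon_{\mathrm{cor}}+\varepsilon_{\mathrm{sec}}$.

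\emph{Main obstacle.} The delicate step is making the PE conditioning rigorous: turning ``every attack with $r^\star\ge r_{\mathrm{wc}}$ is accepted with probability at most $\varepsilon_{\mathrm{PE}}$'' into a trace-distance secrecy bound. The difficulty is that $\varepsilon_{\mathrm{PE}}$ bounds the acceptance probability of a bad channel, not the probability of the bad event under the unconditioned state. I would handle this as in Ref.~\cite{Leverrier2015}. Split into two cases according to whether the actual i.i.d.\ channel satisfies $r^\star\le r_{\mathrm{wc}}$ or $r^\star>r_{\mathrm{wc}}$. In the first case, the privacy amplification bound applies directly. In the second case, the subnormalised accepted state has trace at most $\varepsilon_{\mathrm{PE}}$, so its contribution to the distance from ideal is at most $\varepsilon_{\mathrm{PE}}$. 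The second subtlety is that the PE rounds and the key rounds share the same i.i.d.\ channel parameter. This holds under collective attacks after the de~Finetti reduction of Assumption~\ref{as:std}(i), and that is what permits the case split.
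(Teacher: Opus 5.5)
Your proposal is correct and follows the paper's proof essentially step for step: correctness comes from the hash verification and robustness from Lemma~\ref{lem:twosided}(a). Secrecy comes from a case split at $r_{\mathrm{wc}}$, with Lemma~\ref{lem:twosided}(b) bounding the accepted weight by $\varepsilon_{\mathrm{PE}}$ in the bad case, and Proposition~\ref{prop:mono}, Gaussian extremality, the AEP and the leftover-hash lemma supplying $2\varepsilon_{\mathrm{sm}}+\bar\varepsilon$ in the good case. You are slightly more explicit than the paper in invoking monotonicity of $\psi$ on $[1,\infty)$ to pass from $r^\star$ to $r_{\mathrm{wc}}$ and in disposing of the unphysical branch $\chi_{\mathrm{wc}}=\chi_{\max}^{\mathcal M}$, both of which the paper leaves implicit.
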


\begin{proof}
Correctness with parameter $\varepsilon_{\mathrm{cor}}$ is guaranteed by the
hash-comparison step of the error-correction verification, exactly as in
Refs.~\cite{Furrer2012,Leverrier2015}; it is independent of the PE analysis.
For secrecy, condition on acceptance of the test~\eqref{eq:test} and split
the attack space along $r^\star\lessgtr r_{\mathrm{wc}}$.

\emph{Case 1: $r^\star\ge r_{\mathrm{wc}}$.} By
Lemma~\ref{lem:twosided}(b) and the definition~\eqref{eq:rwc}, acceptance
occurs with probability at most
$\exp(-m\,\psi(r_{\mathrm{wc}}/t_{\mathrm{thr}}))
=\varepsilon_{\mathrm{PE}}$.
This event contributes at most $\varepsilon_{\mathrm{PE}}$ to the trace
distance defining secrecy, exactly as the PE-failure event does in the
standard decomposition~\cite{Leverrier2015}.

\emph{Case 2: $r^\star<r_{\mathrm{wc}}$.} The true channel lies in the
confidence region. Under collective Gaussian attacks the state shared after
$n$ rounds is an i.i.d.\ product, and by Gaussian
extremality~\cite{Wolf2006,GarciaPatron2006} Eve's information per symbol is
bounded by the Holevo quantity evaluated at the true covariance; by the
strict monotonicity of $\chi_{BE}(T_0,\cdot)$
(Proposition~\ref{prop:mono}) this is at most
$\chi_{\mathrm{wc}}$. The finite-size asymptotic equipartition property for
the smooth min-entropy~\cite{Renner2005,Leverrier2015} then gives
$H_{\min}^{\varepsilon_{\mathrm{sm}}}(\bm y\mid E)\ge
n\,[H(y)-\chi_{\mathrm{wc}}]-\sqrt{n}\,
\Delta_{\mathrm{AEP}}(\varepsilon_{\mathrm{sm}},d)$ for the discretized key
variable, and subtracting the error-correction leakage
[$\le n\,H(y)-n\,\beta I_{AB}$ by definition of $\beta$, verified
operationally; see Remark~\ref{rem:leakEC}] and applying the leftover-hash
lemma~\cite{Renner2005} with failure parameter $\bar\varepsilon$ yields
$\varepsilon_{\mathrm{sec}}$-secrecy of any key of
length~\eqref{eq:keylength} with
$2\varepsilon_{\mathrm{sm}}+\bar\varepsilon$ from this case.

Summing the two contributions gives~\eqref{eq:epssec}. Robustness is
Lemma~\ref{lem:twosided}(a).
\end{proof}

\begin{corollary}[Recovery of the Gaussian confidence-interval analysis]
\label{cor:recovery}
As $\tfrac1m\ln\tfrac1{\varepsilon_{\mathrm{PE}}}\to0$,
\begin{equation}
r_{\mathrm{wc}}
= t_{\mathrm{thr}}\left(1+2\sqrt{\tfrac1m\ln\tfrac1{\varepsilon_{\mathrm{PE}}}}
+O\!\big(\tfrac1m\ln\tfrac1{\varepsilon_{\mathrm{PE}}}\big)\right),
\label{eq:rwcexp}
\end{equation}
so the worst-case conditional variance exceeds the threshold by the relative
margin $2\sqrt{\ln(1/\varepsilon_{\mathrm{PE}})/m}$. This coincides with the
inflation $z_{\varepsilon_{\mathrm{PE}}}\,\sigma_{\hat N}/N_0$ used in
standard finite-size parameter estimation~\cite{Leverrier2010}, with
$\sigma_{\hat N}^2=2N_0^2/m$ the chi-squared variance and
$z_{\varepsilon}=\sqrt{2\ln(1/\varepsilon)}$ the Gaussian-tail quantile.
Away from this limit the two treatments differ: the Gaussian-quantile
confidence interval is a tail \emph{approximation}, whereas
$r_{\mathrm{wc}}$ derives from the non-asymptotic
bound~\eqref{eq:twosided}, which is valid at every $m$. Since
$\psi(r)<\tfrac14(r-1)^2$ for $r>1$, one has
$\psi^{-1}(u)>1+2\sqrt{u}$, so the rigorous worst-case parameter is
strictly larger than its Gaussian extrapolation; the discrepancy is the
price of replacing an asymptotic confidence statement with a bound that
holds at all block lengths, and it vanishes in the regime
$m\gg\ln(1/\varepsilon_{\mathrm{PE}})$ of practical interest.
\end{corollary}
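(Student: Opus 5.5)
We plan to prove Corollary~\ref{cor:recovery} by a local inversion of $\psi$ near $r=1$, followed by a global comparison of $\psi$ with its quadratic approximation. Write $u:=\tfrac1m\ln\tfrac1{\varepsilon_{\mathrm{PE}}}$, so that by \eqref{eq:rwc} $r_{\mathrm{wc}}=t_{\mathrm{thr}}\,\psi^{-1}(u)$. It therefore suffices to show that
\[
\psi^{-1}(u)=1+2\sqrt u+O(u)\quad\text{as }u\downarrow0,
\qquad
\psi^{-1}(u)>1+2\sqrt u\quad\text{for all }u>0 .
\]

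For the expansion, set $r=1+x$ with $x>0$ and Taylor-expand $\psi(1+x)=\tfrac12\big((1+x)^{-1}-1+\ln(1+x)\big)$. We have $\psi(1)=\psi'(1)=0$ and $\psi''(1)=\tfrac12$ (Corollary~\ref{cor:finite}). From $\psi''(r)=\tfrac{2-r}{2r^3}$ in Lemma~\ref{lem:uni} we get $\psi'''(1)=-2$. Hence
\[
\psi(1+x)=\tfrac14x^2-\tfrac13x^3+O(x^4).
\]
Solving $\psi(1+x)=u$ in the form $\sqrt{\psi(1+x)}=\sqrt u$ gives
\[
\tfrac12x\big(1-\tfrac43x+O(x^2)\big)^{1/2}=\sqrt u .
\]
The left side is a smooth, strictly increasing function of $x$ near $0$ with derivative $\tfrac12$ at $x=0$. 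By the inverse function theorem, $x=2\sqrt u+O(u)$, and in fact the correction is $+\tfrac83u$. This yields \eqref{eq:rwcexp}.

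Next comes the identification with the Gaussian confidence interval. We have $\sigma_{\hat N}=N_0\sqrt{2/m}$ because $\mathrm{Var}(\chi^2_m)=2m$, and $z_\varepsilon=\sqrt{2\ln(1/\varepsilon)}$. Their product gives $z_\varepsilon\sigma_{\hat N}/N_0=2\sqrt{\ln(1/\varepsilon)/m}=2\sqrt u$, which is exactly the leading term of \eqref{eq:rwcexp}.

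For the strict inequality, define $h(r):=\tfrac14(r-1)^2-\psi(r)$ on $[1,\infty)$. We have $h(1)=0$, and
\[
h'(r)=\tfrac12(r-1)-\tfrac{r-1}{2r^2}=\tfrac{r-1}{2}\big(1-r^{-2}\big),
\]
which is positive for $r>1$. So $\psi(r)<\tfrac14(r-1)^2$ on $(1,\infty)$. Now fix $u>0$ and set $r_1:=1+2\sqrt u$. Then $\psi(r_1)<\tfrac14(r_1-1)^2=u=\psi(\psi^{-1}(u))$. Since $\psi$ is strictly increasing on $[1,\infty)$ (Lemma~\ref{lem:uni}), this forces $\psi^{-1}(u)>r_1$. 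We should also check that $\psi^{-1}$ is defined for every $u>0$; this holds because $\psi(r)\to\infty$ as $r\to\infty$, through the $\ln r$ term.

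The remaining claims are interpretive. Lemma~\ref{lem:twosided}(b) is valid at every $m$, whereas the Gaussian quantile is only a tail approximation. The gap $\psi^{-1}(u)-(1+2\sqrt u)=O(u)$ tends to $0$ when $m\gg\ln(1/\varepsilon_{\mathrm{PE}})$, which we will state as a consequence of the expansion rather than prove separately.

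The main obstacle is making the $O(u)$ remainder rigorous rather than formal. To do this, we will apply the inverse function theorem to the smooth function $x\mapsto 2\sqrt{\psi(1+x)}$, extended as an odd function through $0$. This function is analytic near $0$ because $\psi(1+x)/x^2$ is analytic and nonzero there, and its derivative is $1$ at $x=0$. Consequently its inverse is $x=2\sqrt u+O(u)$ with an explicit constant.
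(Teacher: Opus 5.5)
Your proposal is correct and follows essentially the same route as the paper. Both arguments run as follows:
\begin{itemize}
\item Taylor-expand $\psi$ about $r=1$ as $\tfrac14(r-1)^2-\tfrac13(r-1)^3+O((r-1)^4)$ and invert to get $\psi^{-1}(u)=1+2\sqrt u+O(u)$.
\item Identify $z_\varepsilon\sigma_{\hat N}/N_0=2\sqrt u$ by direct arithmetic.
\item Obtain $\psi(r)<\tfrac14(r-1)^2$ from the positivity of $\tfrac{d}{dr}[\tfrac14(r-1)^2-\psi(r)]=\tfrac{(r-1)(r^2-1)}{2r^2}$.
\end{itemize}
Your inverse-function-theorem step (with the explicit correction $+\tfrac83u$) and the monotonicity argument turning this into $\psi^{-1}(u)>1+2\sqrt u$ only make rigorous what the paper leaves formal.

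One wording fix. The analytic function to invert is the signed root $x\mapsto 2x\sqrt{\psi(1+x)/x^2}$. It is not literally odd, since $\psi(1+x)/x^2$ is not even. On $x\ge0$ it agrees with $2\sqrt{\psi(1+x)}$, so your conclusion is unaffected.
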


\begin{proof}
From Lemma~\ref{lem:uni}, $\psi(r)=\tfrac14(r-1)^2-\tfrac13(r-1)^3+O((r-1)^4)$
near $r=1$, so $\psi^{-1}(u)=1+2\sqrt u+O(u)$ as $u\downarrow0$;
substituting $u=\tfrac1m\ln\tfrac1{\varepsilon_{\mathrm{PE}}}$
into~\eqref{eq:rwc} gives~\eqref{eq:rwcexp}. The identification with the
Gaussian confidence interval follows from
$z_{\varepsilon}\sigma_{\hat N}/N_0
=\sqrt{2\ln(1/\varepsilon)}\cdot\sqrt{2/m}
=2\sqrt{\ln(1/\varepsilon)/m}$. The strict inequality
$\psi(r)<\tfrac14(r-1)^2$ for $r>1$ follows since
$\tfrac14(r-1)^2-\psi(r)$ vanishes to second order at $r=1$ with strictly
positive third derivative $\ge \tfrac{1}{r^3}-\ldots>0$ there; explicitly,
$\tfrac{d}{dr}[\tfrac14(r-1)^2-\psi(r)]
=\tfrac{r-1}2-\tfrac{r-1}{2r^2}=\tfrac{(r-1)(r^2-1)}{2r^2}>0$ for $r>1$.
\end{proof}

\begin{remark}[Operational error-correction leakage]\label{rem:leakEC}
In implementations the term $n\,\beta I_{AB}$ in~\eqref{eq:keylength} is
not assumed but measured: the syndrome length
$\mathrm{leak}_{\mathrm{EC}}$ actually disclosed during reconciliation
replaces $n[H(y)-\beta I_{AB}]$, and correctness is certified by the hash
comparison at level $\varepsilon_{\mathrm{cor}}$. This is the standard
operational reading of~\eqref{eq:keylength}~\cite{Leverrier2015,Jouguet2013}
and does not interact with the PE analysis.
\end{remark}

\begin{remark}[What remains outside the trusted-correlation
statement]\label{rem:outside}
Three ingredients of a fully general proof are additional to
Theorem~\ref{thm:composable} and orthogonal to the exponent it quantifies.
(1)~\emph{Coherent attacks:} the reduction to collective Gaussian attacks
proceeds via the postselection technique~\cite{Christandl2009} or the
Gaussian de~Finetti reduction with its energy
test~\cite{Renner2007,Leverrier2017}, degrading $\varepsilon$ by a
polynomial dilution factor; this affects the prefactor of the security
parameter but not the exponent~\eqref{eq:twosided}.
(2)~\emph{Monitoring noise:} the trusted-correlation model idealizes the
monitored quantities $(V_A,c)$ as known exactly; finite monitoring precision
adds a union-bound term to $\varepsilon_{\mathrm{PE}}$ and shrinks
$t_{\mathrm{thr}}$ by the monitoring tolerance, leaving the structure
of~\eqref{eq:keylength} unchanged. (3)~\emph{General monitored sets:} for
$\mathcal{M}$ beyond the trusted-correlation model, the scalar statistic
$\hat N$ is replaced by a Wishart-type statistic
(cf.\ Remark~\ref{rem:scope}) and $\chi_{\mathrm{wc}}$ by
$g_{\mathcal M}^{-1}$ via Theorem~\ref{thm:main}(iv) under
Condition~\ref{cond:reg}; the exponent remains
$D(\Sigma_0\Vert\Sigma_1)$ with modified one-shot constants.
\end{remark}

\section{Convexity}
\label{sec:convexity}

\begin{theorem}[Convexity on the operational range]\label{thm:convex}
In the trusted-correlation model, suppose that $\chi_{BE}(T_0,\cdot)$ is
concave and strictly increasing in $\xi$ and that $r^\star<2$. Then
$g_{\mathcal M}$ is convex in $\chi_0$ on
$[\chi_{\mathrm{ben}},\chi_{\mathrm{null}}]$.
\end{theorem}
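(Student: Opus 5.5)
By Theorem~\ref{thm:main}(iii), in the trusted-correlation model $g_{\mathcal M}$ is the composition
\[
g_{\mathcal M}(\chi_0)=\psi\big(r(\chi_0)\big),\qquad r(\chi_0)=\frac{N\big(T_0,\xi^\star(\chi_0)\big)}{N_0},\qquad \xi^\star=\big(\chi_{BE}(T_0,\cdot)\big)^{-1}(\chi_0).
\]
The inverse $\xi^\star$ is single-valued by Proposition~\ref{prop:mono}. The plan is to obtain convexity from the standard composition rule: if $\psi$ is convex and nondecreasing on the relevant range of $r$, and $\chi_0\mapsto r(\chi_0)$ is convex, then $\psi\circ r$ is convex. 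I will verify these two ingredients separately and then combine them by a chord (midpoint) inequality. This avoids any differentiability assumption on $\chi_{BE}$.

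\emph{Step 1 (inner map).} Since $N(T_0,\xi)=1+v_{\mathrm{el}}+\eta T_0\xi$ is affine and increasing in $\xi$, the map $r$ is an increasing affine function of $\xi^\star(\chi_0)$. So it suffices to show $\xi^\star$ is convex on $[\chi_{\mathrm{ben}},\chi_{\mathrm{null}}]$. This is the standard fact that the inverse of a strictly increasing concave function is convex. To prove it, take $\chi_a,\chi_b$ with preimages $\xi_a,\xi_b$ and $\lambda\in[0,1]$. Concavity gives
\[
\chi_{BE}\big(T_0,\lambda\xi_a+(1-\lambda)\xi_b\big)\ge\lambda\chi_a+(1-\lambda)\chi_b .
\]
Applying the increasing map $\xi^\star$ to both sides yields
\[
\lambda\xi_a+(1-\lambda)\xi_b\ge\xi^\star\big(\lambda\chi_a+(1-\lambda)\chi_b\big),
\]
which is convexity. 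One point needs care: the interval $[\chi_{\mathrm{ben}},\chi_{\mathrm{null}}]$ must lie in the range of $\chi_{BE}(T_0,\cdot)$ restricted to $\xi\ge\xi_0$. I will state this implicitly, via continuity and strict monotonicity together with the convention $\chi_{\mathrm{null}}\le\chi_{\max}^{\mathcal M}$.

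\emph{Step 2 (outer map).} Lemma~\ref{lem:uni} gives $\psi'(r)=\frac{r-1}{2r^2}\ge0$ and $\psi''(r)=\frac{2-r}{2r^3}>0$ on $[1,2)$. So $\psi$ is nondecreasing and convex on $[1,2)$. The hypothesis $r^\star<2$, read as holding for every $\chi_0$ in the interval, places the whole image $r([\chi_{\mathrm{ben}},\chi_{\mathrm{null}}])$ inside $[1,2)$. Here $r(\chi_{\mathrm{ben}})=1$ because $\xi^\star(\chi_{\mathrm{ben}})=\xi_0$. By monotonicity of $r$ it is enough that $r(\chi_{\mathrm{null}})<2$. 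This interval is convex, so chords of $r$ stay in the region where $\psi$ is convex.

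\emph{Step 3 (composition).} For $\chi_\lambda=\lambda\chi_a+(1-\lambda)\chi_b$, Step 1 gives $r(\chi_\lambda)\le\lambda r(\chi_a)+(1-\lambda)r(\chi_b)$. Monotonicity of $\psi$ on $[1,2)$ then gives
\[
\psi\big(r(\chi_\lambda)\big)\le\psi\big(\lambda r(\chi_a)+(1-\lambda)r(\chi_b)\big),
\]
and convexity of $\psi$ on $[1,2)$ gives
\[
\psi\big(\lambda r(\chi_a)+(1-\lambda)r(\chi_b)\big)\le\lambda\psi\big(r(\chi_a)\big)+(1-\lambda)\psi\big(r(\chi_b)\big).
\]
Chaining the two inequalities is the required convexity of $g_{\mathcal M}$.

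The main obstacle is not these inequalities but the bookkeeping at the endpoints of the domain. At $\chi_{\mathrm{ben}}$ the minimizer formula of Theorem~\ref{thm:main}(iii) was stated only on the open interval, so I will extend it by continuity, using $g_{\mathcal M}(\chi_{\mathrm{ben}})=\psi(1)=0$. I also need to make the hypothesis $r^\star<2$ precise as a uniform bound $r(\chi_{\mathrm{null}})<2$.
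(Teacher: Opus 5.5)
Your proposal is correct and follows the paper's own proof. The paper also argues that the inverse of the concave, strictly increasing $\chi_{BE}(T_0,\cdot)$ is convex, that this convexity passes to $r$ through the affine dependence of $N$ on $\xi$, and that composing with $\psi$, which is convex and nondecreasing on $[1,2)$, gives convexity of $g_{\mathcal M}$. Your explicit chord inequalities and endpoint bookkeeping make precise what the paper leaves implicit: you read $r^\star<2$ uniformly as $r(\chi_{\mathrm{null}})<2$, and you use $g_{\mathcal M}(\chi_{\mathrm{ben}})=\psi(1)=0$. That last identity in fact holds directly from the definition rather than by continuity, since the feasible set at $\chi_{\mathrm{ben}}$ is $\{T_0\}\times[\xi_0,\xi_{\max}]$.
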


\begin{proof}
Let $\chi(\xi):=\chi_{BE}(T_0,\xi)$. By assumption, $\chi$ is concave and
strictly increasing on the relevant domain. Let $\xi(\chi)$ denote its
inverse; this inverse is convex and strictly increasing.

Define $r(\chi):=N(\xi(\chi))/N_0$. Since $N$ depends linearly on $\xi$,
the map $\chi\mapsto r(\chi)$ is convex and strictly increasing.

From Lemma~\ref{lem:uni}, the function $\psi(r)$ is convex and strictly
increasing on the interval $[1,2)$ [because $\psi''(r)>0$ for $r<2$].

Therefore $g_{\mathcal M}(\chi_0)=\psi(r(\chi_0))$ is the composition of the
convex increasing function $\psi$ with the convex function $r(\chi_0)$.
Such a composition is convex. Hence $g_{\mathcal M}$ is convex in $\chi_0$
on the interval where $r^\star<2$.

The assumption $r^\star<2$ holds throughout the operational regime of
positive key rate (the inflection point of $\psi$ at $r=2$ lies beyond any
regime with $K>0$).
\end{proof}

\begin{proposition}[Concavity of the leakage functionals]\label{prop:convex}
Fixing $T = T_0$, the covariance $\Sigma_{AB}(\xi)$ is affine in $\xi$, and
over the parameter range where key positivity holds:
\begin{enumerate}
\item[(a)] Eve's entropy $S_E(\xi) = S(\rho_{AB})$ is strictly increasing
and concave in $\xi$;
\item[(b)] $\chi_{BE}(T_0, \cdot)$ is concave in $\xi$.
\end{enumerate}
Statement (a) is proved: strict monotonicity follows from strict concavity
of the von Neumann entropy under the displacement mixture
$\rho\mapsto\int d\mu_s(z)D(z)\rho D(z)^\dagger$ of
Proposition~\ref{prop:mono} (the displaced states are distinct), and
concavity in $\xi$ follows from the quantum de~Bruijn identity along the
additive-noise semigroup $\{\mathcal M_s\}_{s\ge0}$ on mode $B$,
$\partial_s S(\rho_s)\propto J(\rho_s)$ with $J$ the displacement Fisher
information of mode $B$, together with the monotonicity of $J$ under the
displacement-covariant channel $\mathcal M_u$ [data
processing]~\cite{KonigSmith2014}; since $b$ is affine in $\xi$, $S_E$ is
concave in $\xi$. Statement (b) does not admit a na\"ive
``concave $-$ convex'' decomposition---by (a) and the concavity of
$b\mapsto g(\nu_c(b))$, the quantity $\chi_{BE}=S_E-g(\nu_c)$ is a
difference of two concave functions---and is established numerically.
[Strict monotonicity of $\chi_{BE}(T_0,\cdot)$, formerly grouped with this
statement, is proved analytically in Proposition~\ref{prop:mono}.]
\end{proposition}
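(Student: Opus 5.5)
The plan is to prove statement~(a) by two short arguments on the Gaussian state $\rho_{AB}(\xi)$ at fixed $T=T_0$. Statement~(b) is reported as numerically established and needs no analytic proof here.

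\emph{Setup.} Since $|\Psi\rangle_{ABE}$ is pure, $S_E(\xi)=S(\rho_{AB}(\xi))$. As in the proof of Proposition~\ref{prop:mono}, the covariance $\gamma(\xi)$ of $\rho_{AB}(\xi)$ has a fixed $A$ block and a fixed correlation block. Its $B$ block is $b(\xi)\mathbb{1}_2$ with $b$ affine in $\xi$, so $\xi\mapsto\gamma(\xi)$ is affine. Moreover, for $\xi_1<\xi_2$ we have $\rho_{AB}(\xi_2)=(\mathrm{id}_A\otimes\mathcal M_s)\rho_{AB}(\xi_1)$ with $s=T_0(\xi_2-\xi_1)$.

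\emph{Strict monotonicity.} Write $\rho(\xi_2)=\int d\mu_s(z)\,(\mathbb 1\otimes D(z))\rho(\xi_1)(\mathbb 1\otimes D(z))^\dagger$. Concavity of von Neumann entropy together with unitary invariance gives $S(\rho(\xi_2))\ge S(\rho(\xi_1))$. Equality in the concavity of $S$ would force all the displaced states on the support of $\mu_s$ to coincide. This is impossible, because displacements on $B$ shift the nonzero mean of $\rho(\xi_1)$ conditioned on a heterodyne outcome of $A$; equivalently, the family $\{D(z)\rho D(z)^\dagger\}$ is not constant in $z$ for any state of finite energy. Hence the inequality is strict. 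As a cross-check, one can compare with the explicit symplectic eigenvalues $\nu_{1,2}(b)$ of $\gamma$, where $S=g(\nu_1)+g(\nu_2)$.

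\emph{Concavity.} The route I would take first avoids the de~Bruijn machinery and uses Gaussian extremality. Let $\lambda\in[0,1]$ and $\bar\xi=\lambda\xi_1+(1-\lambda)\xi_2$. Form the mixture $\omega=\lambda\rho(\xi_1)+(1-\lambda)\rho(\xi_2)$. All states involved are centered, so by affinity $\omega$ has covariance $\gamma(\bar\xi)$. Extremality of Gaussian states for entropy at fixed covariance~\cite{Wolf2006} and concavity of $S$ then give
\[
S_E(\bar\xi)=S(\rho(\bar\xi))\ge S(\omega)\ge\lambda S_E(\xi_1)+(1-\lambda)S_E(\xi_2).
\]
This is exactly the claimed concavity along the affine path. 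The argument in the statement is an alternative that I would mention as a consistency check. It uses the semigroup property $\mathcal M_u\mathcal M_s=\mathcal M_{s+u}$ and the quantum de~Bruijn identity $\partial_sS(\rho_s)\propto J(\rho_s)$ for displacements on $B$~\cite{KonigSmith2014}. Data processing under the covariant channel $\mathcal M_u$ gives $J(\rho_{s+u})\le J(\rho_s)$, so the derivative is nonincreasing in $s$; since $s$ is affine in $\xi$, concavity follows.

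\emph{Main obstacle.} The delicate points are the strictness claim and, on the de~Bruijn route, the validity of the identity for a bipartite state where the noise acts on $B$ only, including differentiability in $s$. The extremality argument sidesteps the latter entirely, so I would present it as the proof of concavity. For strictness, the cleanest fallback is direct differentiation of $g(\nu_1(b))+g(\nu_2(b))$ in $b$, using the closed form of the $\nu_i$.
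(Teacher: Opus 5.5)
Your proposal is correct. For strict monotonicity you follow the paper's own argument. You write $\rho_{AB}(\xi_2)$ as the displacement mixture $\int d\mu_s(z)\,(\mathbb{1}\otimes D(z))\rho_{AB}(\xi_1)(\mathbb{1}\otimes D(z))^\dagger$ and invoke strict concavity of the von Neumann entropy, with strictness coming from the displaced states being distinct. The cleanest witness for distinctness is that the $B$-mode first moment of the displaced state is $z$; the heterodyne conditioning on $A$ you mention is unnecessary.

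For concavity you take a genuinely different route. The paper argues infinitesimally along the additive-noise semigroup: the quantum de~Bruijn identity $\partial_s S(\rho_s)\propto J(\rho_s)$, combined with data processing of the displacement Fisher information under the covariant channel $\mathcal M_u$, shows the derivative is nonincreasing in $s$. Concavity then transfers to $\xi$ through the affine map $\xi\mapsto s$. You argue globally instead: the centered mixture $\lambda\rho(\xi_1)+(1-\lambda)\rho(\xi_2)$ has covariance $\gamma(\bar\xi)$ by affinity. Gaussian maximum entropy at fixed covariance, followed by concavity of $S$ under mixing, then gives the concavity inequality in one line.

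Your route needs no differentiability in $s$ and no finiteness or regularity of $J$. It also avoids extending the de~Bruijn identity to a bipartite state with noise acting on $B$ only, which are precisely the technicalities you flagged. It in fact proves more: with $\rho_G(\gamma)$ the centered Gaussian state of covariance $\gamma$, the map $\gamma\mapsto S(\rho_G(\gamma))$ is concave along every affine path of admissible covariance matrices, not just the noise direction. The price is that it uses Gaussianity of $\rho_{AB}(\xi)$ at every $\xi$, which is available here through Assumption~\ref{as:std}(ii). The semigroup argument, by contrast, yields concavity of $s\mapsto S((\mathrm{id}_A\otimes\mathcal M_s)\rho)$ for an arbitrary, possibly non-Gaussian, initial state.

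As in the paper, you correctly treat (b) as numerically established rather than proved.
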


\begin{remark}[Epistemic status]\label{rem:status}
Theorem~\ref{thm:main}(i)--(ii) hold for general $\mathcal{M}$.
Proposition~\ref{prop:mono}, Theorem~\ref{thm:main}(iii),
Corollary~\ref{cor:finite}, Corollary~\ref{cor:reuse},
Lemma~\ref{lem:twosided}, Theorem~\ref{thm:composable}, and
Proposition~\ref{prop:convex}(a) are proved unconditionally in the
trusted-correlation model; in particular, the composable statement rests on
no numerical input. Theorem~\ref{thm:main}(iv) is proved conditionally on
Condition~\ref{cond:reg}, which is itself proved in the trusted-correlation
model (Proposition~\ref{prop:condTC}) and open for general $\mathcal{M}$.

The sole remaining numerical ingredient is the concavity of
$\chi_{BE}(T_0,\cdot)$ in Proposition~\ref{prop:convex}(b), which enters
only Theorem~\ref{thm:convex}; an analytic proof would make convexity of
$g_{\mathcal M}$ unconditional in the operational regime.
\end{remark}

\begin{remark}[Scope]\label{rem:scope}
Corollary~\ref{cor:finite} is the collective-attack exponent for a homodyne
single-quadrature estimator; heterodyne or jointly estimated correlation
replaces the $\chi^2_{N_{\mathrm{PE}}}$ statistic by a Wishart one (with a
degrees-of-freedom reduction equal to the number of jointly estimated
covariance parameters), retaining exponent $D(\Sigma_0\Vert\Sigma_1)$ with
larger one-shot constants. The Gaussian de~Finetti
reduction~\cite{Renner2007,Leverrier2017}, entropy smoothing, and
reconciliation/privacy-amplification terms of the composable bound are
accounted for in Sec.~\ref{sec:composable}; the residual ingredients are
enumerated in Remark~\ref{rem:outside}.
\end{remark}

\section{Numerical Validation}
\label{sec:results}

\subsection{Monte Carlo Validation of Lemma~\ref{lem:uni} and
Corollary~\ref{cor:finite}}

To numerically validate the closed-form Stein exponent derived in
Lemma~\ref{lem:uni} and the rigorous non-asymptotic parameter-estimation
failure bound of Corollary~\ref{cor:finite}, we performed extensive Monte
Carlo simulations of the trusted-correlation model. For each combination of
forged conditional-variance ratio $r^\star > 1$ and block size
$N_{\mathrm{PE}}$, we generated $60{,}000$ independent realizations of the
sample conditional variance $\hat{N}$ under the attack distribution and
computed the empirical missed-detection probability (the fraction of trials
in which $\hat{N} \le N_0$). All supplementary materials are available in
Ref.~\cite{repo}.

\begin{sloppypar}
Figure~\ref{fig:mc_bound} compares the simulated missed-detection
probability against the theoretical upper bound
$\exp(-N_{\mathrm{PE}} \psi(r^\star))$ of Corollary~\ref{cor:finite} on a
log-log scale. The empirical curves lie strictly below the theoretical
bounds for all tested values of $r^\star$ and $N_{\mathrm{PE}}$, confirming
that the non-asymptotic bound holds in practice. As $N_{\mathrm{PE}}$
increases, both the empirical and theoretical probabilities decay
exponentially, with faster decay for larger deviations $r^\star$.
\end{sloppypar}

Figure~\ref{fig:mc_exponent} shows the convergence of the empirical exponent
$-\ln(\varepsilon_{\mathrm{emp}})/N_{\mathrm{PE}}$ toward the theoretical
Stein exponent $\psi(r^\star)$ of Lemma~\ref{lem:uni}. For every $r^\star$,
the simulated exponent approaches the predicted horizontal asymptote as
$N_{\mathrm{PE}}$ grows, providing direct numerical confirmation that
$\psi(r)$ is the correct large-deviation rate.

Table~\ref{tab:mc_summary} summarizes representative results for
$N_{\mathrm{PE}} = 1000$ and $N_{\mathrm{PE}} = 5000$. In all cases the
empirical failure probability remains below the theoretical bound, and the
extracted exponent is consistent with $\psi(r^\star)$. Near the threshold
($r^\star = 1.02$) the quadratic approximation
$\psi(r) \approx \frac14(r-1)^2$ is already accurate, reproducing the
Gaussian confidence-interval scaling used in conventional finite-size
analyses (Corollary~\ref{cor:recovery}).

\begin{table}[t]
\centering
\caption{Summary of Monte Carlo validation for selected $r^\star$ and
$N_{\mathrm{PE}}$ ($60{,}000$ trials per entry).}
\label{tab:mc_summary}
\setlength{\tabcolsep}{3.5pt}
\small
\begin{tabular}{lcccc}
\toprule
$r^\star$ & $N_{\mathrm{PE}}$ & Emp.\ $\varepsilon_{\mathrm{PE}}$ &
$\exp(-N\psi)$ & Emp.\ exp. \\
\midrule
1.02 & 1000 & $3.35 \times 10^{-1}$ & $9.07 \times 10^{-1}$ & 0.0011 \\
1.02 & 5000 & $1.63 \times 10^{-1}$ & $6.14 \times 10^{-1}$ & 0.0004 \\
1.05 & 1000 & $1.44 \times 10^{-1}$ & $5.57 \times 10^{-1}$ & 0.0019 \\
1.05 & 5000 & $7.73 \times 10^{-3}$ & $5.35 \times 10^{-2}$ & 0.0010 \\
1.10 & 1000 & $1.86 \times 10^{-2}$ & $1.11 \times 10^{-1}$ & 0.0040 \\
1.10 & 5000 & $< 10^{-5}$ & $1.67 \times 10^{-5}$ & --- \\
1.20 & 1000 & $< 10^{-5}$ & $1.38 \times 10^{-7}$ & --- \\
1.20 & 5000 & $< 10^{-5}$ & $4.97 \times 10^{-35}$ & --- \\
\bottomrule
\end{tabular}
\end{table}

These simulations confirm that both the Stein exponent of
Lemma~\ref{lem:uni} and the finite-size bound of Corollary~\ref{cor:finite}
are tight and practically useful. The observed behavior is fully consistent
with the information-theoretic analysis presented in
Secs.~\ref{sec:lemma}--\ref{sec:composable}.

\begin{figure}[t]
\centering
\includegraphics[width=\columnwidth]{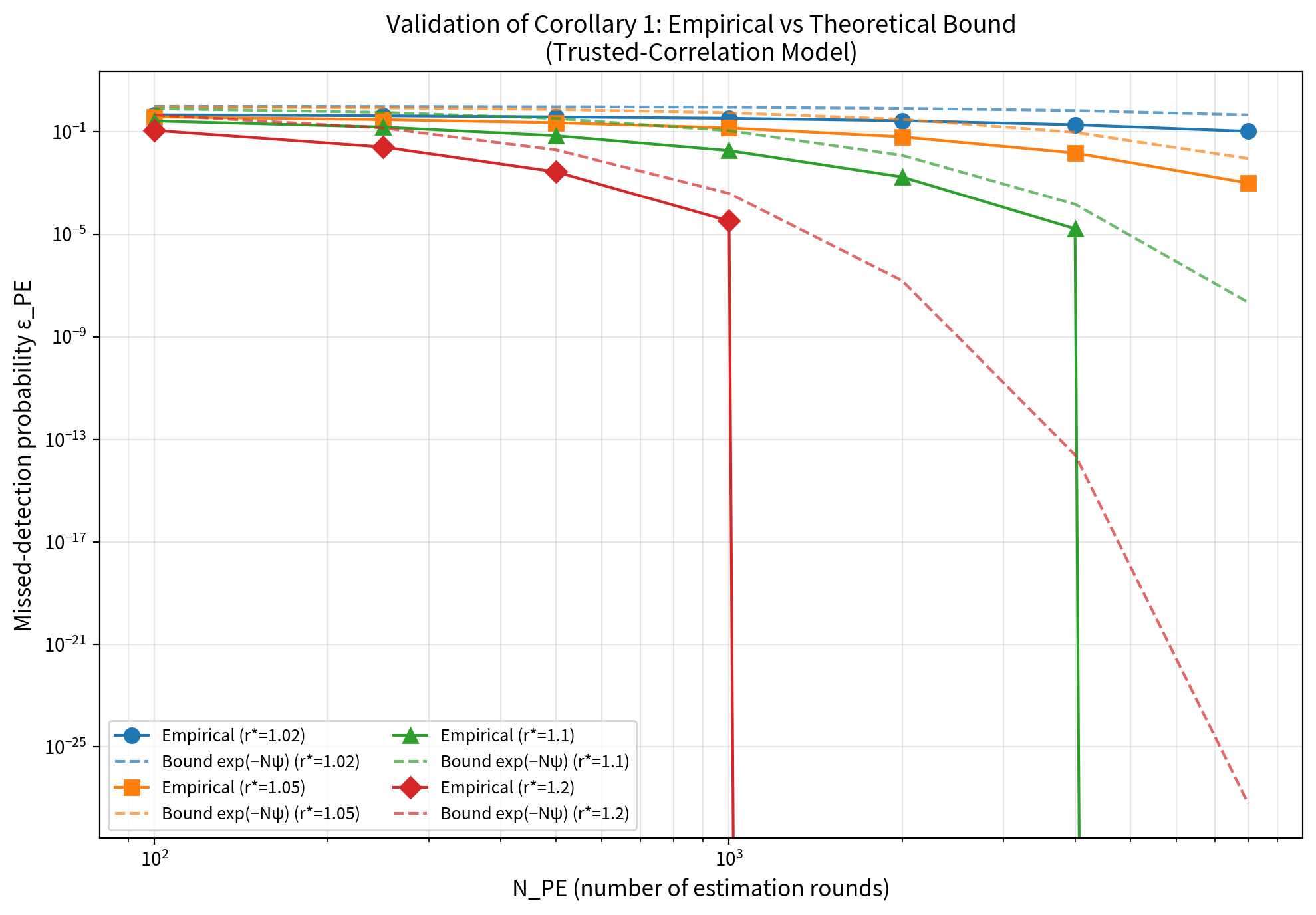}
\caption{Validation of Corollary~\ref{cor:finite}. Empirical
missed-detection probability (solid markers) versus the theoretical upper
bound $\exp(-N_{\mathrm{PE}}\psi(r^\star))$ (dashed lines) for several
forged ratios $r^\star$. The bound holds for all tested block sizes.}
\label{fig:mc_bound}
\end{figure}

\begin{figure}[t]
\centering
\includegraphics[width=\columnwidth]{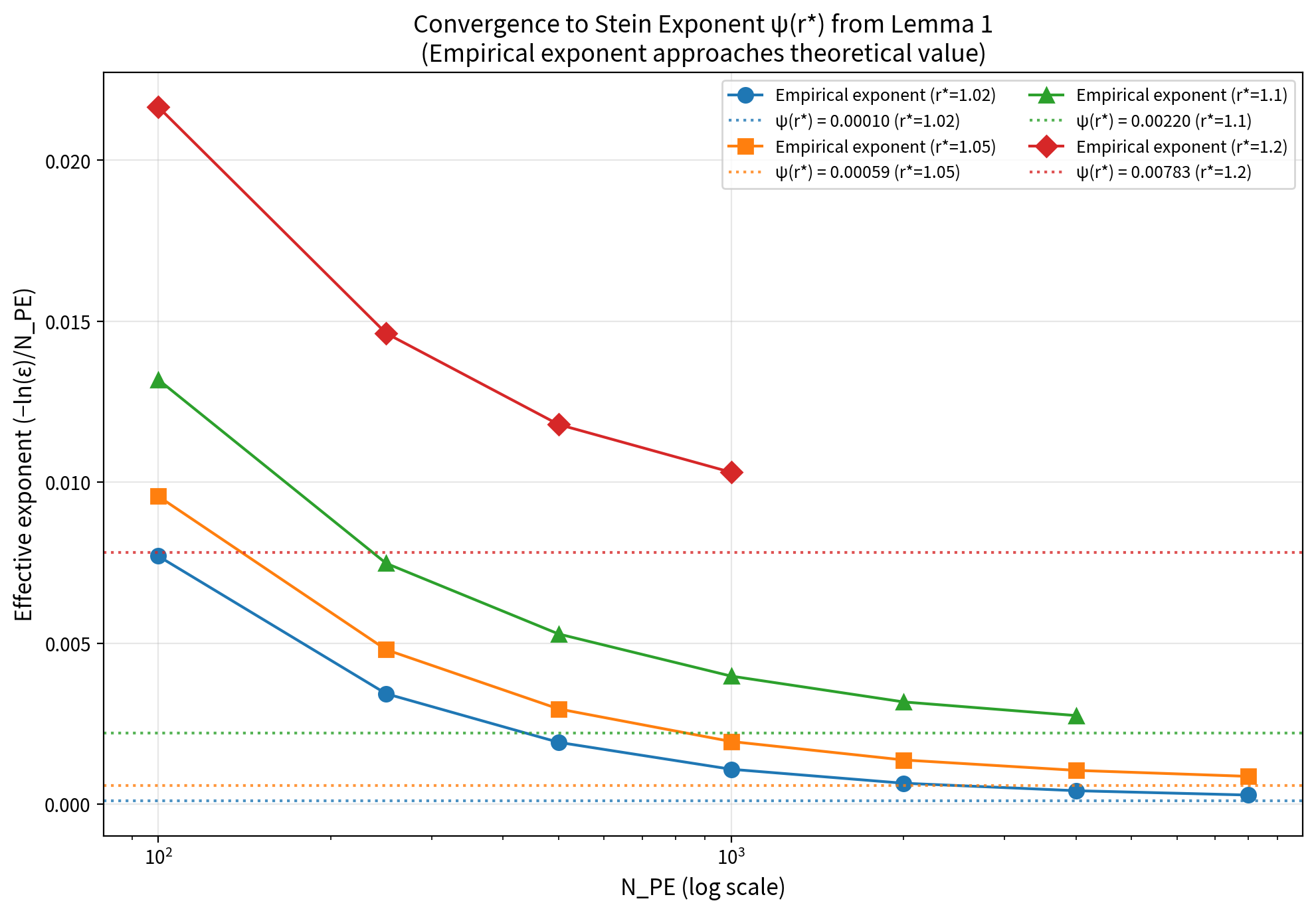}
\caption{Convergence of the empirical exponent to the Stein exponent of
Lemma~\ref{lem:uni}. For each $r^\star$, the simulated rate
$-\ln(\varepsilon)/N_{\mathrm{PE}}$ (markers) approaches the theoretical
value $\psi(r^\star)$ (dotted horizontal lines) as $N_{\mathrm{PE}}$
increases.}
\label{fig:mc_exponent}
\end{figure}

\begin{figure}[t]
\centering
\includegraphics[width=\columnwidth]{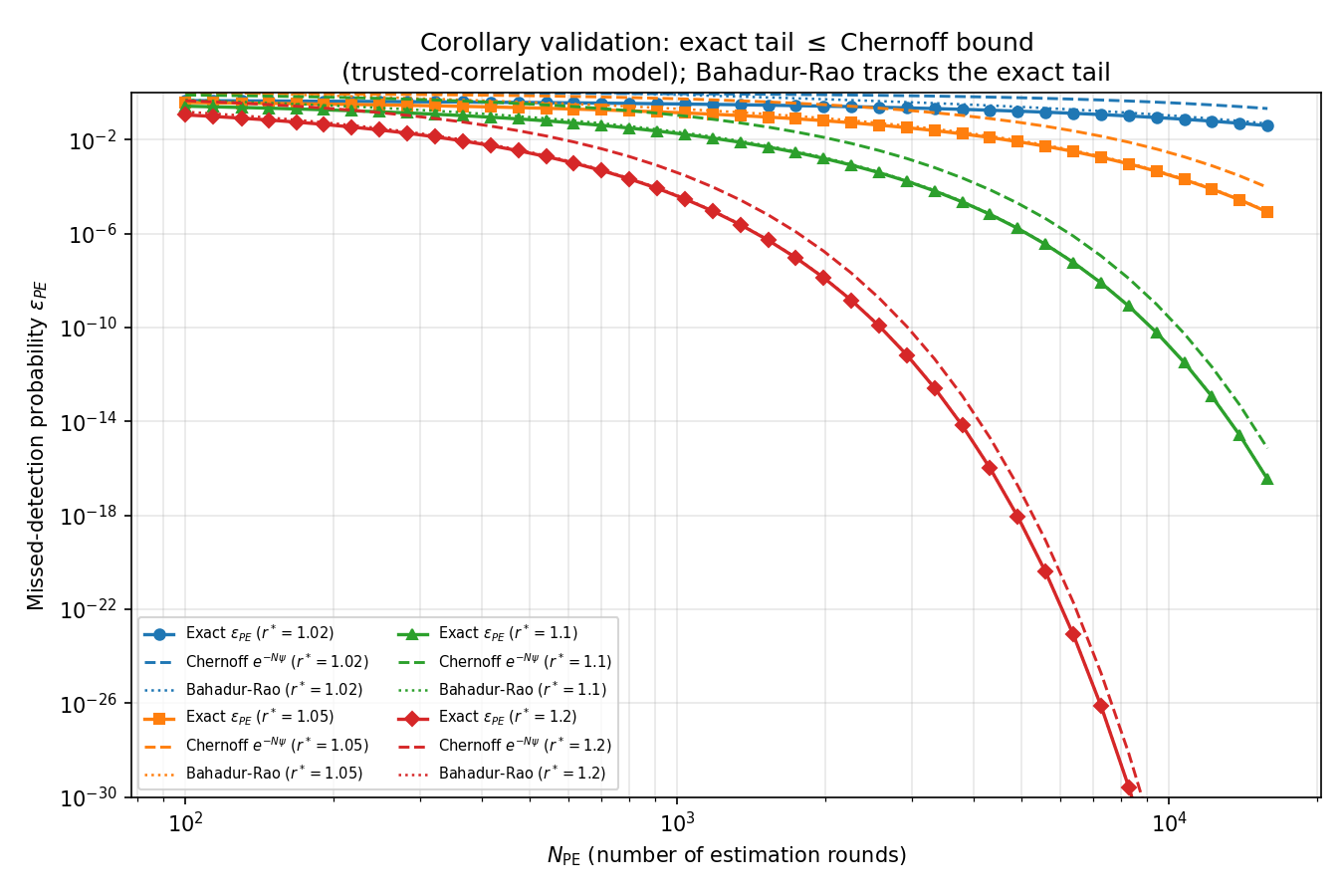}
\caption{Validation of the finite-size bound (trusted-correlation model).
Exact missed-detection probability
$\varepsilon_{\mathrm{PE}}=F_{\chi^2_{N_{\mathrm{PE}}}}(N_{\mathrm{PE}}/r^\star)$
(solid) against the Chernoff bound $e^{-N_{\mathrm{PE}}\psi(r^\star)}$ of
Corollary~\ref{cor:finite} (dashed) and the Bahadur--Rao refinement of
Corollary~\ref{cor:BR} (dotted), for attack strengths $r^\star=N^\star/N_0$.
The exact tail lies below the Chernoff bound at every $N_{\mathrm{PE}}$, the
gap growing as $\sqrt{N_{\mathrm{PE}}}$ (conservative, in the direction of
security); the Bahadur--Rao curve tracks the exact tail. All curves are
closed-form, with no Monte-Carlo sampling floor.}
\label{fig:bound}
\end{figure}

\begin{figure}[t]
\centering
\includegraphics[width=\columnwidth]{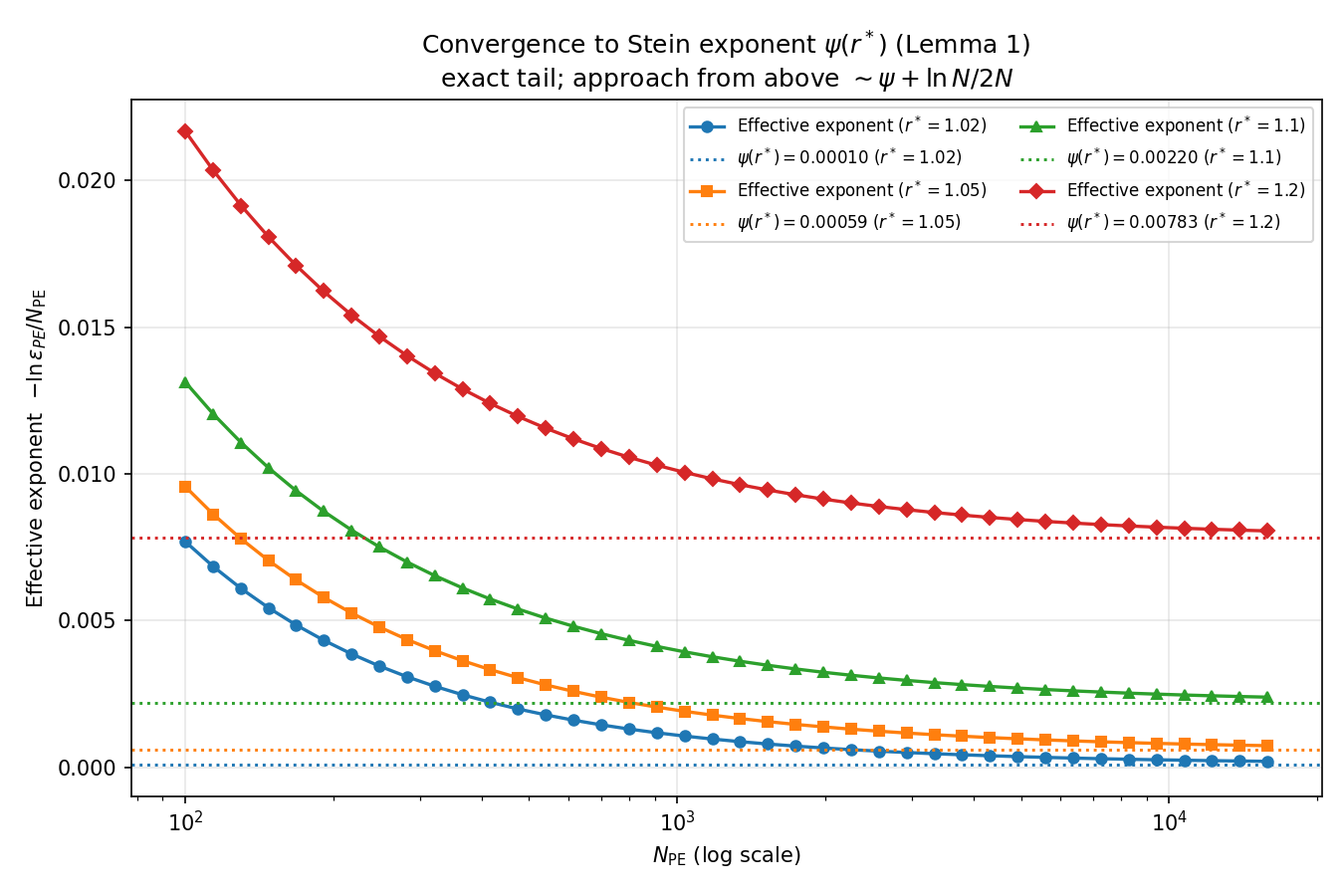}
\caption{Convergence of the effective exponent
$-N_{\mathrm{PE}}^{-1}\ln\varepsilon_{\mathrm{PE}}$ (solid) to the Stein
exponent $\psi(r^\star)$ of Lemma~\ref{lem:uni} (dotted). The effective
exponent approaches $\psi(r^\star)$ from above at rate
$\ln N_{\mathrm{PE}}/(2N_{\mathrm{PE}})$, as predicted by
Corollary~\ref{cor:BR}; convergence is slower for near-threshold $r^\star$,
where $\psi$ is small.}
\label{fig:conv}
\end{figure}

\begin{figure}[t]
\centering
\includegraphics[width=\columnwidth]{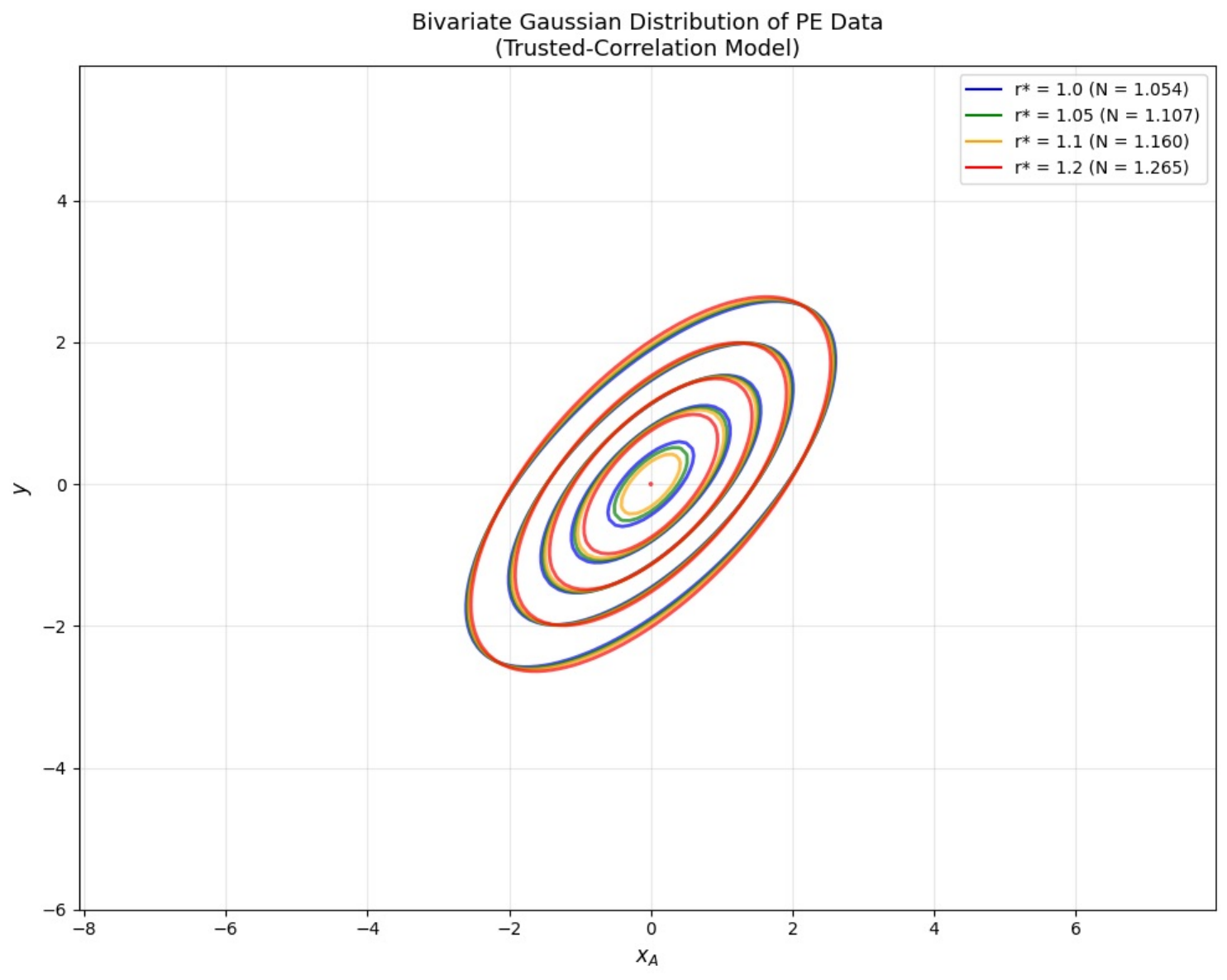}
\caption{Bivariate Gaussian distribution of the parameter-estimation data
$(x_A, y)$ in the trusted-correlation model for different values of the
forged conditional-variance ratio $r^\star = N/N_0$. The contours represent
level sets of the joint probability density. As $r^\star$ increases, the
distribution becomes wider along the $y$-direction while the correlation
(slope of the ellipse) remains fixed, making the forged statistics
progressively more distinguishable from the benign case ($r^\star = 1$).}
\label{fig:bivariate_gaussian_pe}
\end{figure}

\begin{figure*}[t]
\centering
\subfloat[Benign state ($r \approx 1$)\label{fig:wigner_benign}]{%
    \includegraphics[width=0.42\textwidth]{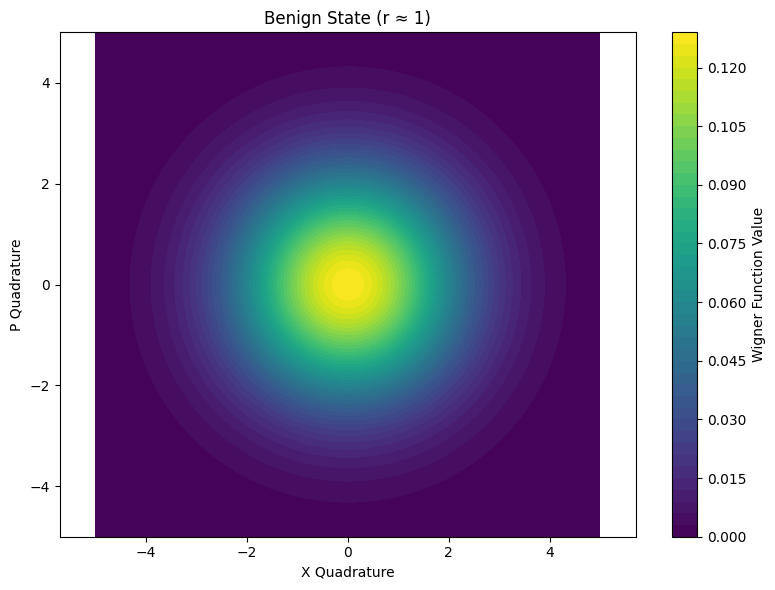}}
\hfil
\subfloat[Forged state (higher excess noise)\label{fig:wigner_forged}]{%
    \includegraphics[width=0.42\textwidth]{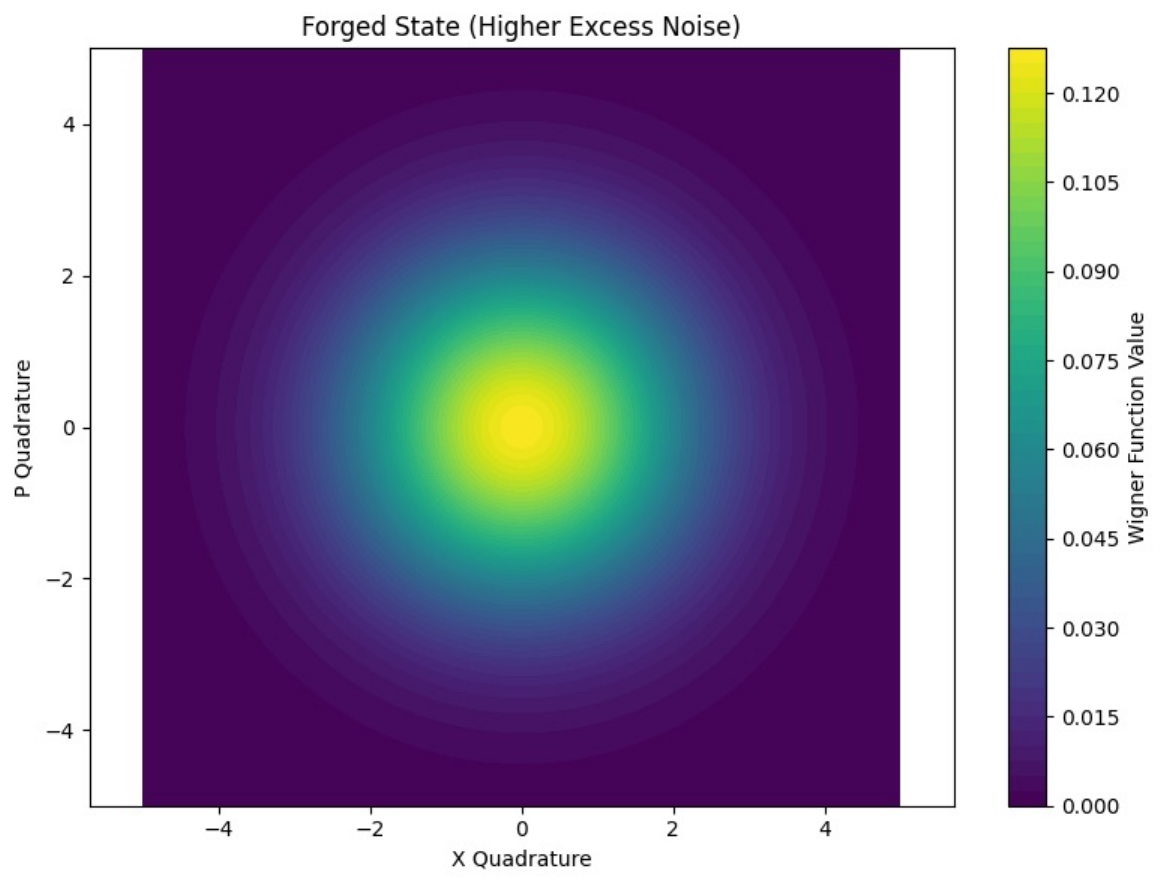}}
\caption{Wigner functions of the Gaussian quantum states corresponding to
the benign operating point and a forged state with increased excess noise.
Both states are centered at the origin in phase space. The forged state
exhibits a broader distribution due to higher excess noise, which increases
the distinguishability from the benign state and leads to a positive Stein
exponent $\psi(r^\star)$ (see Lemma~\ref{lem:uni}).}
\label{fig:wigner_functions}
\end{figure*}

\begin{figure*}[t]
\centering
\subfloat[Stein exponent $\psi(r)$ over a wide range of
$r$\label{fig:psi_wide}]{%
    \includegraphics[width=0.48\textwidth]{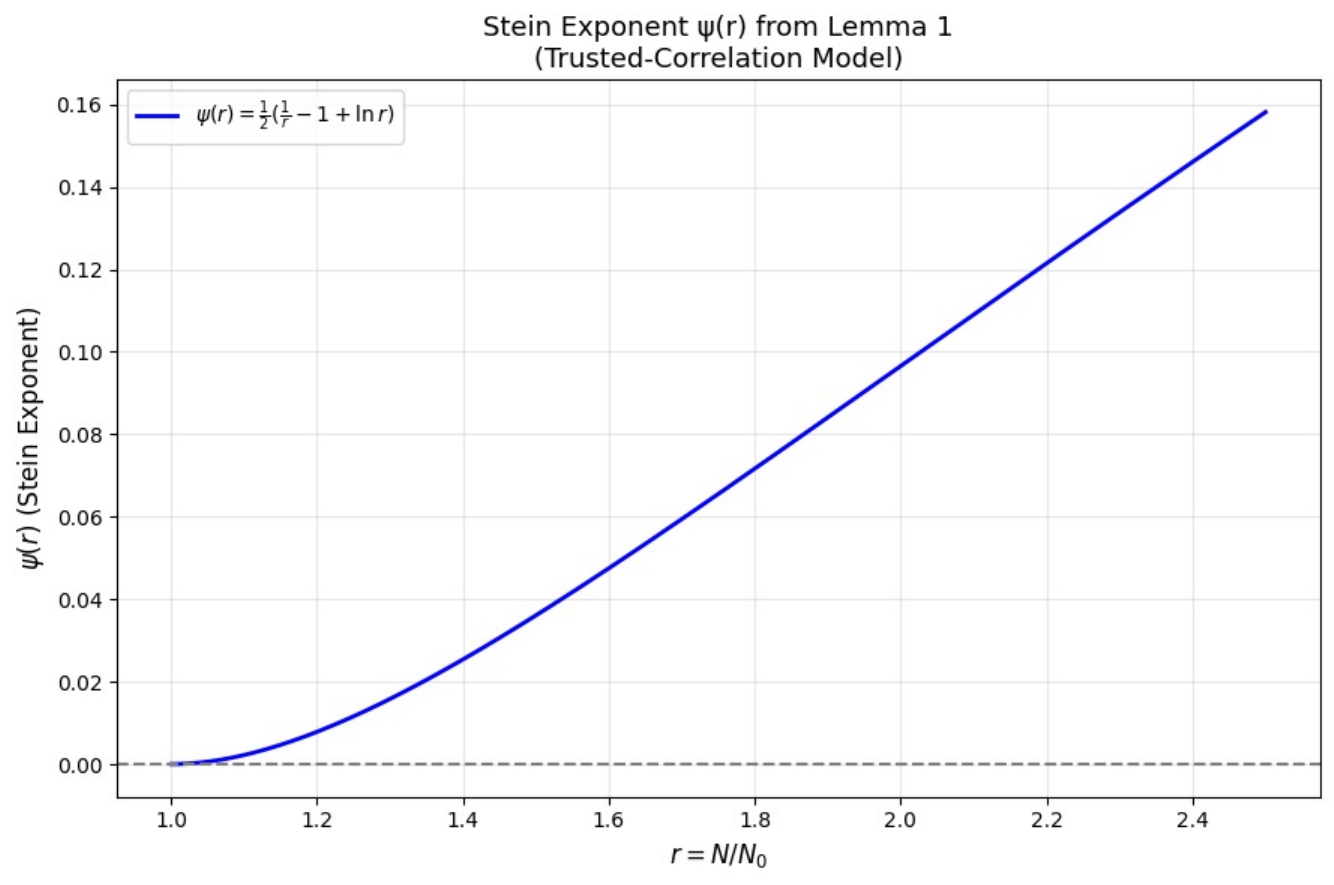}}
\hfil
\subfloat[Near-threshold behavior and quadratic
approximation\label{fig:psi_near}]{%
    \includegraphics[width=0.48\textwidth]{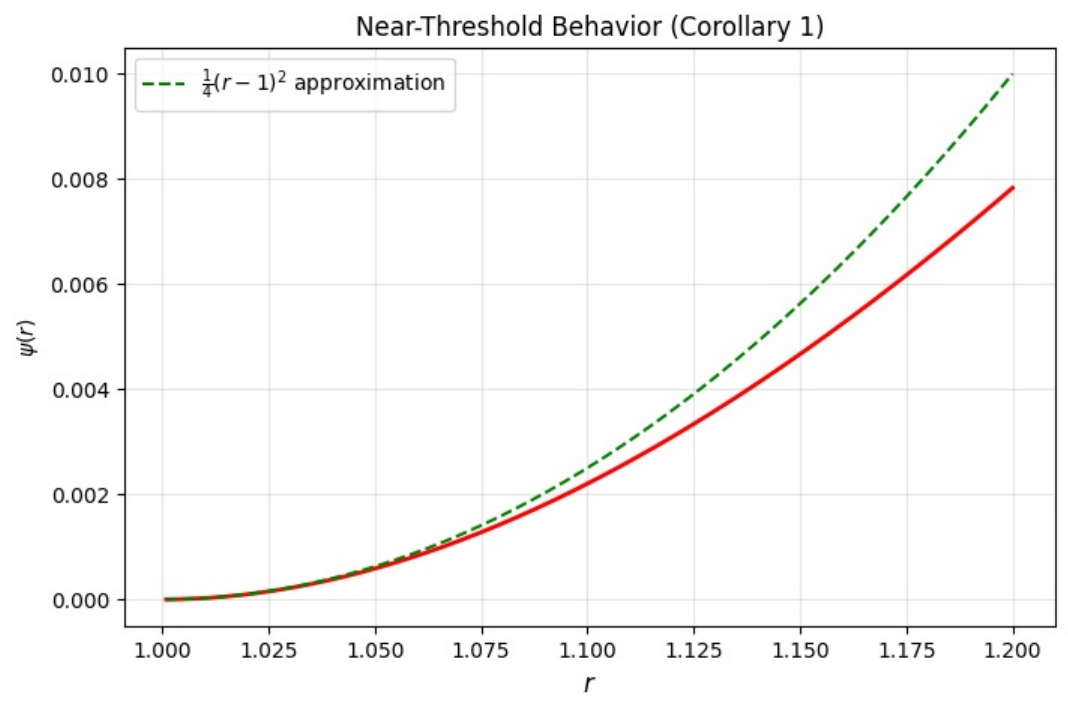}}
\caption{Stein exponent $\psi(r)$ from Lemma~\ref{lem:uni} in the
trusted-correlation model. (a) Behavior of $\psi(r)$ for $r \in [1, 2.5]$.
(b) Zoomed-in view near the threshold ($r \approx 1$), showing that
$\psi(r)$ closely follows the quadratic approximation $\frac{1}{4}(r-1)^2$,
which explains the Gaussian confidence-interval scaling used in conventional
finite-size analyses (Corollary~\ref{cor:recovery}).}
\label{fig:stein_exponent}
\end{figure*}

\begin{figure*}[t]
\centering
\includegraphics[width=0.9\textwidth]{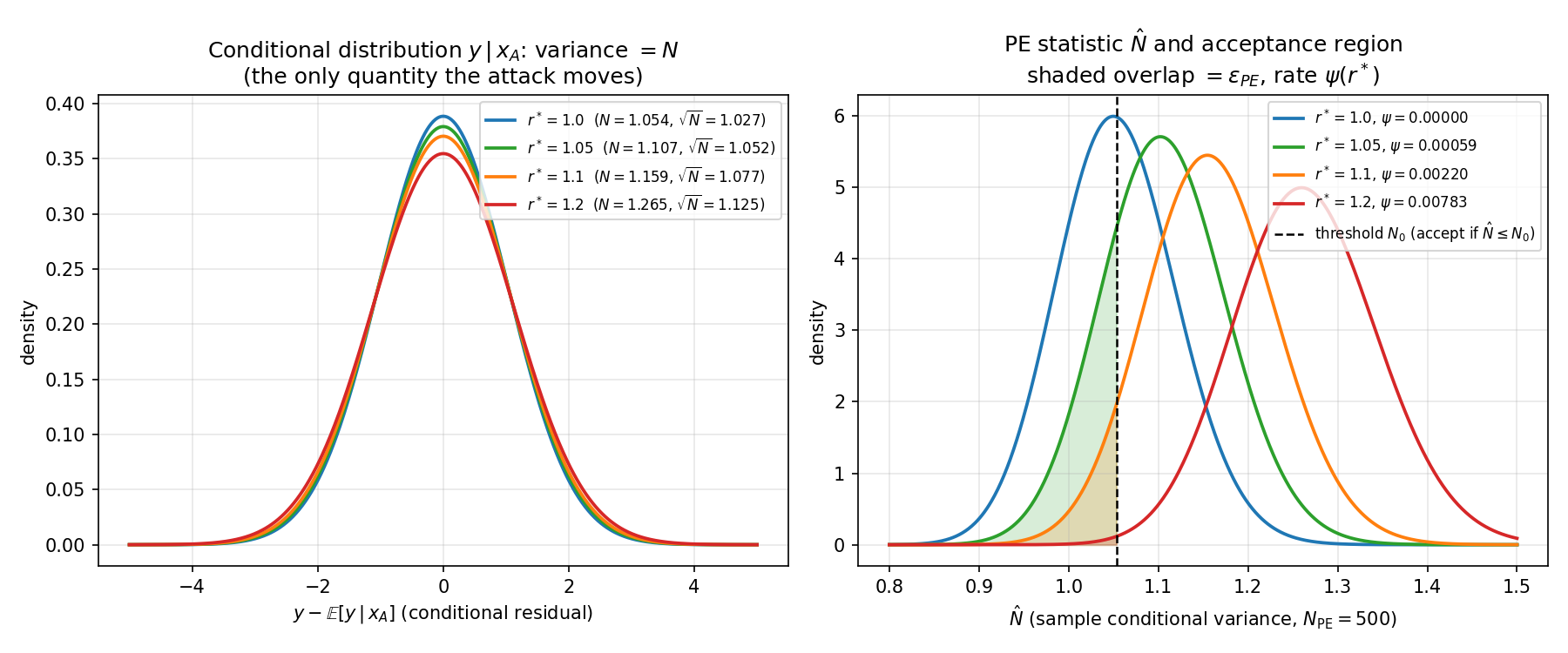}
\caption{Geometry of the trusted-correlation detection problem.
\emph{Left:} the conditional distribution $y\,|\,x_A$ is Gaussian with
variance $N$---the only quantity an attack moves. The widths $\sqrt{N}$
grow only from $1.027$ to $1.125$ as $r^\star=N^\star/N_0$ rises from $1$
to $1.2$, the geometric statement that benign and attacked channels are
nearly indistinguishable. \emph{Right:} sampling distribution of the
parameter-estimation statistic $\hat N$ at $N_{\mathrm{PE}}=500$, with the
acceptance region $\{\hat N\le N_0\}$ (dashed). For each attack the shaded
mass below the threshold is the missed-detection probability
$\varepsilon_{\mathrm{PE}}$, whose exponential decay rate is the Stein
exponent $\psi(r^\star)$ of Lemma~\ref{lem:uni} (legend); the overlap
shrinks with $r^\star$ at the rate quantified by
Corollaries~\ref{cor:finite}--\ref{cor:BR}.}
\label{fig:geometry}
\end{figure*}

\section{Discussion}
\label{sec:discussion}
The dichotomy in Theorem~\ref{thm:main}---$g_{\mathcal M}>0$ under a trusted
shot-noise unit, $g_{\mathcal M}\equiv0$ without it---places the standard
real-time monitoring countermeasures~\cite{Ma2013,JouguetCalib2013,Qi2015}
on an information-theoretic footing: they are precisely the resource that
renders the benign-statistics certificate non-forgeable, and the
rate~\eqref{eq:reuse} quantifies how much leakage a forged certificate can
conceal as a function of the number of estimation rounds.
Corollary~\ref{cor:finite} shows the exponent is not merely asymptotic: it
upper-bounds the composable failure probability at every finite block
length, and Theorem~\ref{thm:composable} shows exactly where it lives inside
the composable finite-size architecture---the acceptance test is the
parameter-estimation step of the proof, and the certificate reusability rate
is the worst-case Holevo term of the key-length formula, with the Gaussian
confidence intervals of conventional finite-size
analysis~\cite{Leverrier2010} recovered as its near-threshold limit
(Corollary~\ref{cor:recovery}).

The strict monotonicity on which the invertibility of the rate rests is no
longer an assumption: Proposition~\ref{prop:mono} proves it by noise
injection, with the quantitative gap~\eqref{eq:monobound}, so
Theorem~\ref{thm:main}(iii), Corollary~\ref{cor:reuse}, and
Theorem~\ref{thm:composable} are unconditional in the trusted-correlation
model. Three problems remain open: an analytic proof of the concavity in
Proposition~\ref{prop:convex}(b), which would remove the remaining
assumption from Theorem~\ref{thm:convex} and is nontrivial because
$\chi_{BE}$ is there a difference of two concave functions; an analytic
proof of Condition~\ref{cond:reg} for general monitored sets
$\mathcal{M}$---for instance via the precision-coordinate convexity of
Remark~\ref{rem:condsupport}---which would make Theorem~\ref{thm:main}(iv)
unconditional; and the extension of Theorem~\ref{thm:composable} beyond the
trusted-correlation model, for which the Wishart statistics and residual
ingredients are identified in Remarks~\ref{rem:scope}
and~\ref{rem:outside}. The trusted-correlation results depend on none of
these.



\end{document}